\def\adl@drawiv#1#2#3{%
        \hskip.5\tabcolsep
        \xleaders#3{#2.5\@tempdimb #1{1}#2.5\@tempdimb}%
                #2\z@ plus1fil minus1fil\relax
        \hskip.5\tabcolsep}
\newcommand{\cdashlinelr}[1]{%
  \noalign{\vskip\aboverulesep
           \global\let\@dashdrawstore\adl@draw
           \global\let\adl@draw\adl@drawiv}
  \cdashline{#1}
  \noalign{\global\let\adl@draw\@dashdrawstore
           \vskip\belowrulesep}}
\theoremstyle{plain}
\newtheorem{thm}{Theorem}[section] % reset theorem numbering for each section
\newtheorem{prop}[thm]{Proposition}
\newtheorem{lemma}[thm]{Lemma}
\theoremstyle{definition}
\newtheorem{defn}[thm]{Definition} % definition numbers are dependent on theorem numbers
\newtheorem{example}[thm]{Example}
\newtheorem{remark}[thm]{Remark}
\newcommand{\mscr}[1]{\mathscr{#1}}
\newcommand{\twid}[1]{\widetilde{#1}}
\newcommand{\ZZ}{\mathbb{Z}}
\newcommand{\RR}{\mathbb{R}}
\newcommand{\mcH}{\mathcal{H}}
\newcommand{\mcF}{\mathscr{F}}
\newcommand{\mcX}{\mathcal{X}}
\newcommand{\mcY}{\mathcal{Y}}
\newcommand{\de}{\delta}
\newcommand{\ep}{\epsilon}
\newcommand{\la}{\lambda}
\newcommand{\Sa}{\Sigma}
\newcommand{\De}{\Delta}
\renewcommand{\l}{\left}
\renewcommand{\r}{\right}
\newcommand{\defeq}{\vcentcolon=}
\newcommand{\red}{\color{red}}
\newcommand{\data}[2]{#1_1\ldots,#1_{#2}}
\newcommand{\iid}{\overset{\text{iid}}{\sim}}
\DeclareMathOperator{\vspan}{span}
\DeclareMathOperator{\tr}{tr}
\newcommand{\ul}{\underline}
\title{Benefits and Pitfalls of the Exponential Mechanism with Applications to Hilbert Spaces and Functional PCA }
\author{Jordan Awan \quad Ana Kenney \quad Matthew Reimherr\quad Aleksandra Slavkovi\'c
  \\Department of Statistics, Pennsylvania State University, University Park, PA}
\date{}
\begin{document}
\maketitle

\setcounter{page}{1}
\begin{abstract}

%Abstracts must be a single paragraph, ideally between 4--6 sentences long.
%Gross violations will trigger corrections at the camera-ready phase.

The exponential mechanism is a fundamental tool of Differential Privacy (DP) due to its strong privacy guarantees and flexibility. 
We study its extension to settings with summaries based on infinite dimensional outputs such as with functional data analysis, shape analysis, and nonparametric statistics. 
We show that one can design the mechanism with respect to a specific base measure over the output space, such as a Guassian process.
We provide a positive result that establishes a Central Limit Theorem for the exponential mechanism quite broadly. 
We also provide an apparent negative result, showing that the  magnitude of the noise introduced for privacy is asymptotically non-negligible relative to the statistical estimation error. We develop an $\ep$-DP mechanism for functional principal component analysis, applicable in separable Hilbert spaces. We demonstrate its performance via simulations and applications to two datasets.
\end{abstract}

%%%%%%%%%%%%%%%%%%%%%%%%%%%%%%%%%%%%%%%%%%%%%%%%%%%%%%%%%%%%
%%%   INTRODUCTION
%%%%%%%%%%%%%%%%%%%%%%%%%%%%%%%%%%%%%%%%%%%%%%%%%%%%%%%%%%%%

\section{Introduction}
Data privacy and security have become increasingly critical to society as we continue to collect troves of highly individualized data.  In the last decade, we have seen the emergence of new tools and perspectives on data privacy %that provide mathematically guaranteed protections against disclosures.  One of the most researched frameworks is that of 
such as {\it Differential Privacy} (DP), introduced by \citet{Dwork2006:Sensitivity}, which provides a rigorous and interpretable definition of privacy.  Within the DP framework, numerous tools have been developed that achieve DP in a variety of applications and contexts, such as  empirical risk minimization
\citep{Chaudhuri2011:DPERM,Kifer2012:PrivateCERM}, linear and logistic regression 
\citep{Chaudhuri2009:Logistic,Zhang2012:FunctionalMR,Yu2014,Sheffet2017,Awan2018:Structure}, 
hypothesis testing \citep{Vu2009,Wang2015:PrivacyFree,Gaboardi2016,Awan2018:Binomial,Canonne2018}, network data \citep{Karwa2016:Sharing,Karwa2016:Inference}, and density estimation \citep{Wasserman2010:StatisticalFDP}, to name a few.

One of the most flexible and convenient DP tools is the \textit{exponential mechanism}, introduced by \citet{McSherry2007}, which often fits in naturally with estimation techniques from statistics and machine learning.  Many estimation procedures can be described as maximizing a particular objective or utility function:
\[
\hat b = \arg\max_{b \in \mathcal Y} \xi(b),
\qquad \text{where } \xi:\mathcal Y \to \RR,
\]
or, equivalently, minimizing a loss function such as least squares or the negative log-likelihood.  The exponential mechanism provides a sanitized version of $\hat b$ by using the objective function directly to add noise. The sanitized estimate, $\tilde b$, is drawn from a density, $f(b)$, that is proportional to
\[
f(b) \propto \exp\left\{ \frac{\epsilon}{2\Delta} \xi(b) \right\},
\]
where $\Delta$ captures the {\it sensitivity} of the objective function to small perturbations in the data, and $\epsilon$ is the desired {\it privacy budget} (details in Sections \ref{s:dp} and \ref{s:exponential}).  The idea behind this mechanism is to assign higher density values to regions with higher utility.  The constant $\Delta/\epsilon$ adjusts the spread of the density; as the sensitivity increases or as the privacy budget decreases (meaning a decreased disclosure risk), the variability of $\tilde b$ increases.  A major advantage of such an approach is its use of the objective function from the non-private estimate, $\hat b$, which naturally promotes perturbations with higher utility and discourages those with poor utility.   

%{\bf Our Contributions:} 
In this paper we study the exponential mechanism, especially as it pertains to functional data analysis, shape analysis, and nonparametric statistics, where one has a (potentially) infinite dimensional output.  We show that the exponential mechanism can be applied in such settings, but requires a specified base measure over the output space $\mcY$.  %This is driven by the lack of a natural base measure, such as Lebesgue, in infinite dimensions. 
We propose using a Gaussian process as the base measure, as these distributions are well studied and easy to implement.   
We derive a Central Limit Theorem (CLT) for the exponential mechanism quite broadly, however, this result also implies that   the magnitude of the noise introduced for privacy is of the same order as the statistical estimation error. 
In particular, we show that in most natural settings the exponential mechanism does not add an asymptotically negligible noise, even in finite dimensions.  

Using our approach, we develop an $\ep$-DP mechanism for functional principal component analysis (FPCA), which extends the method of \citet{Chaudhuri2013} to separable Hilbert spaces. We show that a Gaussian process base measure enables us to modify the Gibbs sampling procedure of \citet{Chaudhuri2013} to this functional setting. We illustrate the performance of our private FPCA mechansim through simulations, and apply our mechanism to both the Berkeley growth study from the \texttt{fda} package \cite{fdapackage} and the Diffusion Tensory Imaging (DTI) dataset from the \texttt{refund} package \cite{refundpackage}.

%This is in contrast to Gaussian or Laplace mechanisms, whose noise is often asymptotically negligible. While these mechanisms can also be written in the format of the exponential mechanism, they do not fit the assumptions of our theorems, illustrating that while many natural utility functions do not produce asymptotically negligible noise, other choices can. 

{\bfseries Related Work: }This work most directly builds off of  \citet{Hall2013} and \citet{Mirshani2007}, which develop the first techniques for producing fully functional releases under DP. Another work in this direction is \citet{Alda2017}, in which they use Bernstein polynomial approximations to release functions. Recently, \citet{Smith2018} applied the techniques of \citet{Hall2013} to privatize gaussian process regression. In their setup, they assume that the predictors are public knowledge, and use this information to carefully tailor the sanitization noise. 

There have been a few accuracy bounds regarding exponential mechanism, which can be found in Section 3.4 of \citet{Dwork2014:AFD}. However, these results bound the loss in terms of the objective function, rather than in terms of the private release. \citet{Wasserman2010:StatisticalFDP} also develop some accuracy bounds for the exponential mechanism, focusing on mean and density estimation. They show that in the mean estimation problem, the exponential mechanism introduces $O(1/\sqrt n)$ noise. Our asymptotic analysis of the exponential mechanism agrees in this setting, and extends this result to a large class of objective functions. 

Our application to FPCA extends the private PCA method proposed in  \citet{Chaudhuri2013}. There have been other approaches to private multivariate PCA. \citet{Blum2005} were one of the first to develop a DP procedure for principal components, which is a postprocessing of a noisy covariance matrix. \citet{Dwork2014} follow the same approach and develop bounds for this algorithm; they also develop an online algorithm for private PCA. \citet{Jiang2013} modify this approach by both introducing noise in the covariance matrix as well as to the projection. \citet{Imtiaz2016} also add noise to the covariance matrix, but use a Wishart distribution rather than normal or Laplace noise.

{\bfseries Organization: }
%The remainder of the paper is organized as follows: 
In Section \ref{s:dp}, we review the necessary background of Differential Privacy. In Section \ref{s:exponential}, we recall the exponential mechanism and give asymptotic results for the performance of the exponential mechanism in both finite and infinite dimensional settings. In Section \ref{s:pca} we show how the exponential mechanism can be applied to produce Functional Principal Components, and in Section \ref{s:sampling} we give a Gibbs sampler for this mechanism. In Section \ref{s:numerical}, we study the performance of the private principal components on both simulated data and on the Berkeley and DTI datasets. 
%a childhood growth curves dataset. 
Finally, we give our concluding remarks in Section \ref{s:conclusions}.

%Functional principal components is one of the main workhorses in functional data analysis, FDA.  This is largely due to the infinite dimensional nature of the data and parameters.  As in multivariate analysis, FPCA is a dimension reduction technique, though it can also be used to aid with smoothing \citep{yao2005functional}.  However, unlike in classic multivariate statistics, many models in FDA cannot even be fit unless one incorporates a dimension reduction technique or some form of regularization \citep[e.g.][]{hall2007methodology}.  

%Statistical privacy for multivariate PCA has received recent attention in CITE.  Similarly, CITE represent some of the first works in privacy for FDA.  As FDA continues to broaden its footprint on a number of applied areas, privacy is becoming an increasingly important consideration.  Much of the momentum in FDA research has been driven by the big data revolution and increasingly sophisticated data gathering technologies. Unfortunately, this means that the privacy disclosure risks are orders of magnitude higher than in traditional statistical settings.

%In this work we present a new methodology for achieving 

%%%%%%%%%%%%%%%%%%%%%%%%%%%%%%%%%%%%%%%%%%%%%%%%%%%%%%%%%%%%
%%%   Differential Privacy
\section{Differential Privacy}\label{s:dp}

In this section we provide a brief overview of differential privacy (DP). %, introduced by \citet{Dwork2006:Sensitivity}.  
Throughout, we let $\mcX$ denote an arbitrary set, which represents a particular population, and let $\mcX^n$ be the $n$-fold Cartesian product, which represents the collection of all possible samples that could be observed.  
We begin by defining the {\it Hamming Distance} between two databases.  
\begin{defn}[Hamming Distance]\label{HammingDistance}
  %Let $\mathcal X$ be any set. We write the $n$-fold \emph{Cartesian Product} $\mathcal X^n = \{(X_1,\ldots, X_n)\mid X_i \in \mathcal X, 1\leq i \leq n\}$. Then 
  The bivariate function $\de:\mathcal X^n \times \mathcal X^n \rightarrow \ZZ$, which maps $\de(X,Y) \defeq \#\{i \mid X_i \neq Y_i\}$, is called the \emph{Hamming Distance} on $\mathcal X^n$.
\end{defn}
\noindent It is easy to verify that $\de$ is a metric on $\mathcal X^n$. If $\de(X,Y)=1$ we call $X$ and $Y$ \emph{adjacent}.  

Since we are focused on infinite dimensional objects, we define {\it Differential Privacy} broadly for any statistical summary.  In particular, suppose that $f:\mcX^n \to \mcY$ represents a summary of $\mcX^n$, and let $\mcF$ be a $\sigma$-algebra of subsets of $\mcY$ so that the pair $(\mcY, \mcF)$ is a measurable space.  From a probabilistic perspective, a {\it privacy mechanism} is a family of probability measures $\{ \mu_X: X \in \mcX^n\}$ over $\mcY$.  We can now define what we mean when we say the mechanism satisfies $\epsilon$-DP. While DP was originally introduced in \citet{Dwork2006:Sensitivity}, Definition \ref{DP} is similar to the versions given in \citet{Wasserman2010:StatisticalFDP} and \citet{Kifer2010}.

\begin{defn}[Differential Privacy: \citealp{Dwork2006:Sensitivity}]\label{DP}
%Let $\ep>0$ be given, let $\mathcal X$ be a set, and let $(\mathcal Y, \mscr F)$ be a measurable space. 
A privacy mechanism $\{\mu_X : X \in \mcX^n \}$  
%set of probability measures $\{ \mu_X \mid X\in \mathcal X^n\}$ on $(\mathcal Y, \mscr F)$ 
satisfies $\ep$-Differential Privacy ($\ep$-DP) if for all $B \in \mcF$ and  adjacent $X,X' \in \mcX^n$,  
\[
\mu_X(B) \leq \mu_{X'}(B) \exp(\ep).
\]
%where if $\mu_X(B)=\mu_{X'}(B)=0$, we define the ratio to be $1$.
\end{defn}

%\begin{remark}
%  Rather than limiting our observations to elements of a Cartesian product, equipped with the hamming distance, we can allow our observations to be elements of any metric space $(\mathcal X^*,\rho)$. Then a modified version of Definition \ref{DP} can be given replacing $\mathcal X^n$ with $\mathcal X^*$ and $\de$ with $\rho$. However, it can be nontrivial to choose $\rho$ so that $\rho(X,Y)=1$ correctly reflects the idea that $X$ and $Y$ differ in one individual's information. {\red JA: This remark could be removed.}
%\end{remark}

From Definition \ref{DP}, we see that, for an $\epsilon$-DP mechanism, $\mu_X$ and $\mu_{X'}$ must be {\it equivalent measures} (i.e. they agree on sets of measure zero) if $\de(X,X')=1$. By transitivity, it follows that $\mu_X$ and $\mu_Y$ are equivalent measures for any $X,Y\in \mathcal X^n$. By the Radon-Nikodym Theorem, we can always therefore interpret DP in terms of densities with respect to a common base measure, $\nu$ (if needed, one can always take $\nu = \mu_X$ for an arbitrary $X\in \mathcal X^n$). 
%{ DON"T WASSERMAN AND ZHOU in JASA define DP in terms of densities? We should ref? {\red JA: No they do not. I did not find any similar result in their paper.} }

\begin{prop}
  \label{DPDensity}
Let  $\mscr M=\{\mu_X\mid X\in \mathcal X^n\}$ be a privacy mechanism over a measurable space $(\mathcal Y, \mscr F)$.  Then $\mscr M$ achieves $\epsilon$-DP if and only if there exists a base measure $\nu$ such that $\mu_X\ll \nu$ for all $X\in \mathcal X^n$ and the densities $\{f_X: X \in \mcX^n\}$ (Radon-Nikodym derivatives) of the $\mu_X$ (with respect to $\nu$) satisfy
%for almost every ($\nu$) $b\in \mathcal Y$,
\[
%\ess_\nu\sup_{b\in \mathcal Y}\sup_{\substack{\de(X,X')=1\\X,X'\in \mathcal X^n}}
 f_X(b) \leq f_{X'}(b) \exp(\ep), \]
$\nu$-almost everywhere and for all adjacent $X,X' \in \mcX^n$. %If both $f_X(b)=f_{X'}(b)=0$, we consider the ratio to be $1$.
\end{prop}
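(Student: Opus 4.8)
The plan is to prove the two directions separately, with the forward direction being essentially a packaging of the discussion immediately preceding the statement, and the reverse direction being a short computation.

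For the ``only if'' direction, suppose $\mscr M$ achieves $\epsilon$-DP. The discussion above the proposition already observes that for adjacent $X,X'$ the inequalities $\mu_X(B)\le e^\epsilon\mu_{X'}(B)$ and $\mu_{X'}(B)\le e^\epsilon\mu_X(B)$ force $\mu_X$ and $\mu_{X'}$ to be mutually absolutely continuous, and then transitivity along paths in the (connected) Hamming graph on $\mcX^n$ gives that all the $\mu_X$ are mutually equivalent. So I would fix an arbitrary $X_0\in\mcX^n$ and set $\nu=\mu_{X_0}$; then $\mu_X\ll\nu$ for every $X$, and the Radon–Nikodym derivatives $f_X=d\mu_X/d\nu$ exist. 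To get the pointwise density inequality from the set inequality, I would argue by contradiction: if the set $A=\{b: f_X(b)>e^\epsilon f_{X'}(b)\}$ had $\nu(A)>0$, then integrating over $A$ would yield $\mu_X(A)=\int_A f_X\,d\nu>e^\epsilon\int_A f_{X'}\,d\nu=e^\epsilon\mu_{X'}(A)$, contradicting $\epsilon$-DP. Hence $\nu(A)=0$, i.e. $f_X\le e^\epsilon f_{X'}$ holds $\nu$-a.e., for every adjacent pair.

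For the ``if'' direction, suppose such a $\nu$ and densities $f_X$ exist with $f_X\le e^\epsilon f_{X'}$ holding $\nu$-a.e. for all adjacent $X,X'$. Then for any $B\in\mscr F$ and any adjacent $X,X'$,
\[
\mu_X(B)=\int_B f_X\,d\nu \le \int_B e^\epsilon f_{X'}\,d\nu = e^\epsilon\,\mu_{X'}(B),
\]
which is exactly $\epsilon$-DP.

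The only mildly delicate point is the measure-theoretic bookkeeping in the ``only if'' direction: one must make sure the a.e. inequality can be stated with respect to a single common base measure $\nu$ rather than a pair-dependent one, but this is immediate once $\nu=\mu_{X_0}$ is fixed, since each $\mu_X$ (and hence each ``bad set'' $A$) is absolutely continuous with respect to it. There is no real analytic obstacle here; the content of the proposition is conceptual — that $\epsilon$-DP, a priori a statement about all measurable sets, is equivalent to a pointwise statement about densities — and the proof is the standard Radon–Nikodym argument.
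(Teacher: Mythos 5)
Your proof is correct and follows essentially the same route as the paper: the ``if'' direction is the same one-line integral bound, and the ``only if'' direction uses the same equivalence-of-measures observation (taking $\nu=\mu_{X_0}$) followed by the same contradiction argument on the set where $f_X>e^\epsilon f_{X'}$. Your choice of the maximal bad set $A=\{b:f_X(b)>e^\epsilon f_{X'}(b)\}$ is a marginally cleaner way to negate the a.e.\ statement than the paper's phrasing, but the substance is identical.
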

\begin{proof}
The reverse direction is given in Remark 1 from \citet{Hall2013}, though we provide the argument here again for completeness. Let $B\in \mscr F$ and $X,X'\in \mathcal X^n$ be adjacent elements. Then 
\begin{align*}
\mu_X(B)&= \int_B f_X(b) \ d\nu(b)
=\int_B \frac{f_{X'}(b)}{f_{X'(b)}} f_{X}(b) \ d \nu(b)\leq \int_B \exp(\ep) f_{X'}(b)\ d\nu(b)
=\exp(\ep) \mu_{X'}(b).
\end{align*}
%which implies that $\mathscr M$ achieves $\epsilon$-DP.

Going in the other direction we will use a proof by contradiction.  Assume that $\mathscr M$ is an $\epsilon$-DP mechanism.  Recall that two measures are equivalent if they agree on the zero sets, thus, as we have said, the measures in a DP mechanism must all be equivalent.  So, we can assume that all of the measures have a density with respect to some common base measure, $\nu$, which, without loss of generality, we can take to be one of the elements of $\mathscr M$.  
Now assume that there exists a set $B$ and some adjacent databases $X,X'$ such that $f_X(b) > f_{X'}(b) \exp(\epsilon)$ for all $b \in B$ and that $\nu(B) > 0$.  Then this would imply the strict inequality
\begin{align*}
\mu_X(B) &= \int_B f_X(b) \ d \nu(b)> \exp(\epsilon)\int_B \  f_{X'}(b) \ d \nu(b) = \exp(\epsilon)\mu_{X'}(B),
\end{align*}
which is a contradiction, and thus the claim holds.
\end{proof}
Interpreting DP in terms of densities is common in the DP literature (e.g. \citealp{Dwork2014:AFD, Kifer2012:PrivateCERM}), however, we could not find a reference for the precise statement and proof, especially for the reverse implication. 

%%%%%%%%%%%%%%%%%%%%%%%%%%%%%%%%%%%%%%%%%%%%%%%%%%%%%%%%%%%%%%%%%%%%%%%%%%%%%%%%%%%%%%%
%%%    EXPONENTIAL MECHANISM
%%%%%%%%%%%%%%%%%%%%%%%%%%%%%%%%%%%%%%%%%%%%%%%%%%%%%%%%%%%%%%%%%%%%%%%%%%%%%%%%%%%%%%%
\section{Exponential Mechanism}\label{s:exponential}
One of the earliest mechanisms designed to satisfy $\ep$-DP, is the \emph{exponential mechanism}, introduced by \citet{McSherry2007}. %The exponential mechanism allows one to use an objective or utility function to build a mechanism which satisfies $\ep$-DP. 
It uses an objective or utility function, which in practice, can be the same objective function used for a (non-private) statistical or machine learning analysis, thus making it especially easy to link DP with existing inferential tools. A simple proof for Proposition \ref{ExponentialMechanism} can be found in \citet{McSherry2007}.
\begin{prop}
  [Exponential Mechanism: \citealp{McSherry2007}]\label{ExponentialMechanism}
Let $(\mathcal Y, \mscr F, \nu)$ be a measure space. Let $\{ \xi_X: \mathcal Y \rightarrow \RR \mid X\in \mathcal X^n\}$ be a collection of measurable functions.  We say that this collection has a %($\mscr F/\mscr R_1$) 
 finite \emph{sensitivity} $\Delta_{\xi}$, if %$\Delta_\xi$ is the smallest value satisfying 
\[ %\ess_{\nu} \sup_{b\in \mathcal Y} \sup_{\substack{\de(X,X')=1\\X,X'\in \mathcal X^n}}  
|\xi_X(b) - \xi_{X'}(b)| \leq \Delta_{\xi} < \infty,\]
for all adjacent $X,X'$ and $\nu$-almost all $b$. %, and if $\Delta_\xi < \infty$.  
If $\int_{\mathcal Y} \exp(\xi_X(b)) \ d\nu(b)<\infty$ for all $X\in \mathcal X^n$, then the collection of probability measures $\{\mu_X\mid X\in \mathcal X^n\}$ with densities $f_X$ (with respect to $\nu$) satisfying 
\[f_X(b) \propto \exp\l[\l(\frac{\ep}{2\Delta_{\xi}}\r) \xi_X(b)\r]\]
satisfies $\ep$-DP.
\end{prop}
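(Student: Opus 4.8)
The plan is to verify the defining inequality of $\epsilon$-DP from Proposition~\ref{DPDensity}, namely that the densities $f_X$ satisfy $f_X(b) \le f_{X'}(b)\exp(\epsilon)$ for $\nu$-almost all $b$ and all adjacent $X, X'$. Once this pointwise bound on densities is established, Proposition~\ref{DPDensity} immediately gives that $\{\mu_X\}$ satisfies $\epsilon$-DP, so the whole argument reduces to a direct estimate on the ratio $f_X(b)/f_{X'}(b)$.

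First I would make the normalizing constants explicit. Since $\int_{\mathcal Y}\exp(\xi_X(b))\,d\nu(b) < \infty$ for every $X$, and $\xi_X$ differs from $\xi_{X'}$ by at most $\Delta_\xi$, the same finiteness holds with $\xi_X$ replaced by $(\epsilon/2\Delta_\xi)\xi_X$ (the exponent is just a bounded rescaling plus a bounded additive shift of an integrable exponent, so $\int \exp[(\epsilon/2\Delta_\xi)\xi_X]\,d\nu =: Z_X \in (0,\infty)$). Then the density is precisely $f_X(b) = Z_X^{-1}\exp[(\epsilon/2\Delta_\xi)\xi_X(b)]$, and likewise for $X'$. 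Next I would bound the ratio as a product of two pieces: the ratio of exponentials, $\exp[(\epsilon/2\Delta_\xi)(\xi_X(b) - \xi_{X'}(b))]$, and the ratio of normalizers, $Z_{X'}/Z_X$. For the first piece, the sensitivity bound $|\xi_X(b) - \xi_{X'}(b)| \le \Delta_\xi$ gives $\xi_X(b) - \xi_{X'}(b) \le \Delta_\xi$ for $\nu$-almost all $b$, so this factor is at most $\exp(\epsilon/2)$. For the second piece, I apply the same sensitivity bound inside the integral: $\exp[(\epsilon/2\Delta_\xi)\xi_{X'}(b)] \le \exp(\epsilon/2)\exp[(\epsilon/2\Delta_\xi)\xi_X(b)]$ pointwise $\nu$-a.e., and integrating gives $Z_{X'} \le \exp(\epsilon/2) Z_X$, hence $Z_{X'}/Z_X \le \exp(\epsilon/2)$. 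Multiplying the two bounds yields $f_X(b)/f_{X'}(b) \le \exp(\epsilon/2)\exp(\epsilon/2) = \exp(\epsilon)$ for $\nu$-almost all $b$, which is exactly the hypothesis of Proposition~\ref{DPDensity}.

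The only genuinely delicate point is handling the null sets correctly: the sensitivity inequality $|\xi_X(b)-\xi_{X'}(b)| \le \Delta_\xi$ holds only $\nu$-a.e., and one must argue that the exceptional sets do not accumulate across the (at most countably many, but in general uncountably many) pairs $X,X'$. In fact, for a fixed adjacent pair $(X,X')$ there is a single $\nu$-null set outside of which the bound holds, and the density inequality is only required to hold $\nu$-a.e.\ for each fixed adjacent pair, so this is not actually an obstacle once the quantifiers are read carefully. I would also note that positivity of $Z_X$ (so that $f_X$ is a genuine density and the ratios are well-defined) is automatic because $\exp[(\epsilon/2\Delta_\xi)\xi_X]$ is strictly positive. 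With these points addressed, the proof is complete; the main ``work'' is simply the two-factor decomposition of the ratio and reusing the sensitivity bound once pointwise and once under the integral sign.
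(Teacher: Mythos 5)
Your argument is correct and is essentially the standard McSherry--Talwar proof that the paper cites (and which appears, commented out, in the source): bound the ratio of unnormalized densities by $\exp(\epsilon/2)$ using the sensitivity, bound the ratio of normalizing constants by another $\exp(\epsilon/2)$ by integrating the same pointwise inequality, and multiply. One small caveat: your parenthetical justification that $\int\exp(\xi_X)\,d\nu<\infty$ forces $Z_X=\int\exp[(\epsilon/2\Delta_\xi)\xi_X]\,d\nu<\infty$ is not valid in general --- rescaling the exponent by a constant $c=\epsilon/(2\Delta_\xi)\neq 1$ need not preserve integrability of the exponential (e.g.\ $\xi(b)=-2\log(1+|b|)$ against Lebesgue measure on $\RR$ with $c\le 1/2$), and $c\xi_X$ is not an additive perturbation of $\xi_X$ by a bounded amount. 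This imprecision is really inherited from the hypothesis as stated in the proposition; it is cleanest to simply assume $Z_X<\infty$ (or that $\nu$ is finite and $\xi_X$ bounded above, as the paper does in its applications), after which the rest of your argument, including your careful handling of the per-pair null sets, goes through verbatim.
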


\begin{comment}
\begin{proof}
{\red This proof is already done in McSherry and Talwar's paper.}

  We will use Proposition \ref{DPDensity} to prove the result.  Let $X,X' \in \mathcal X^n$ be adjacent elements. Then for $\nu$-almost every $b\in \mathcal Y$, the un-normalized densities satisfy 
\begin{align*}
\frac{\exp\l[ \l(\frac{\ep}{2\De_\xi} \r) \xi_X(b)\r]}{\exp\l[ \l(\frac{\ep}{2\De_\xi} \r) \xi_{X'}(b)\r]}
&= \exp\l[\frac{\ep}{2\Delta_\xi} \l( \xi_X(b) - \xi_{X'}(b)\r)\r]\\
&\leq \exp\l[ \frac{\ep}{2\De_\xi} \De_\xi \r] 
=\exp(\ep/2).
\end{align*}
Turning to the normalizing constants, by definition we clearly have that $\xi_X(b) \leq \xi_{X'}(b) + \Delta_\xi$, and so
\begin{align*}
&\frac{\int\exp\l[ \l(\frac{\ep}{2\De_\xi} \r) \xi_{X'}(b)\r]\ d\nu(b)}{\int\exp\l[ \l(\frac{\ep}{2\De_\xi} \r) \xi_{X}(b)\r]\ d\nu(b)}\\
&\leq \frac{\int\exp\l[ \l(\frac{\ep}{2\De_\xi} \r) (\xi_{X}(b)+\De_\xi)\r]\ d\nu(b)}{\int\exp\l[ \l(\frac{\ep}{2\De_\xi} \r) \xi_{X}(b)\r]\ d\nu(b)}
%=e^{\ep/2}\cdot\frac{\int\exp\l[ \l(\frac{\ep}{2\De(\xi)} \r) \xi_{X'}(b)\r]\ d\nu(b)}{\int\exp\l[ \l(\frac{\ep}{2\De(\xi)} \r) \xi_X(b)\r]\ d\nu(b)}
=\exp(\ep/2).
\end{align*}
Combining these two inequalities gives $\frac{f_X(b)}{f_{X'}(b)}\leq \exp(\ep)$.
%\begin{align*}
%&\frac{f_X(b)}{f_{X'}(b)}\\
%&=
%\frac{\exp\l[ \l(\frac{\ep}{2\De} \r) \xi_X(b)\r]\ 
%\int\exp\l[ \l(\frac{\ep}{2\De} \r) \xi_{X'}(b)\r]\ d\nu(b)}
%{\int\exp\l[ \l(\frac{\ep}{2\De} \r) \xi_{X}(b)\r]\ d\nu(b)\ \l(\exp\l[ \l(\frac{\ep}{2\De} \r) %\xi_{X'}(b)\r]\r)}\leq exp(\ep).
%\end{align*}
\end{proof}
\end{comment}

We call the set $\{\xi_X\mid X\in \mathcal X^n\}$ the \emph{Objective Function}, used in the exponential mechanism. Note that in Proposition \ref{ExponentialMechanism}, if $\nu$ is a finite measure, $\De(\xi)<\infty$, and $\xi_X(b)$ is bounded above for all $X\in \mathcal X'$ and $\nu$-almost all $b$, then one immediately has $\int \exp(\xi_X(b)) \ d\nu(b)<\infty$.  We will exploit this fact later on as our base measures in infinite dimensions will actually be taken from Gaussian processes, not from any form of Lebesgue measure.

The exponential mechanism offers a general approach to building DP mechanisms, and in fact, any DP mechanism can be expressed as an instantiation of the exponential mechanism, by taking the objective function to be the log-density of the mechanism \citep{McSherry2007}. We remark that the factor of 2 in the exponential mechanism can sometimes be removed (e.g. location families).

Since the solution to many statistical problems can be expressed as the optimizers of some expression, it is natural to set the objective function in the exponential mechanism to this expression. Often times, such expressions can be expressed as empirical risks, such as %```````````````````````````````````````````````````````````````````````````````````````````````````````````````````````````````````````````````````````````````````````````````````````````````````````````````````````````````````````````````````````````````````````````````````````````````````````````````````````````````````````````````````````````````````````````````````````````````````````````````````````````````````````````````````````````````````````````````````````````````````````````````````````````````````````````````````````````````````````````````````````````````````````````````````````````````Examples of such problems are
the MLE/MAP estimate \citep{Wang2015:PrivacyFree}, principal component analysis \citep{Chaudhuri2013}, and quantiles of one-dimensional statistics \citep{Smith2011:Privacy-preservingSE}. The following result shows that for objective functions of such forms, the noise added by the exponential mechanism is asymptotically normal.
%is of order $O(1/\sqrt n)$.

\begin{thm}[Utility of Exp Mech]\label{thm:utility1}
Assume the observed record, $X_1,\dots,X_n$, and corresponding sequence of objective functions $\xi_n(b):=\xi_X(b)$, for $b \in \mathbb R^d$ satisfy
\begin{enumerate}
    \item $-n^{-1}\xi_n(b)$ %is bounded from above (across both $X$ and $b$) and $\hat b = \arg \max_{b\in \RR^p} \xi_X(b)$ exists and is unique;
    are twice differentiable convex functions and there exists %a finite $\alpha \in \RR^+$
    a finite $\alpha>0$ 
    such that the eigenvalues of $-n^{-1} \xi_n(b)''$ are greater than $\alpha$ for all $n$ and $b\in \mathbb R^d$;
    %\item $n^{-1} \xi_X(b)$ is twice continuously  differentiable at $\hat b$  (uniformly in $n$, i.e. twice equidifferentiable);
    \item the minimizers satisfy $\hat b \to b^\star \in \mathbb R^d$ and $-n^{-1} \xi_n(\hat b)'' \to \Sigma^{-1}$ where $\Sigma$ is a $p \times p$ positive definite matrix;
    \item $\xi_n$ has finite sensitivity $\De$, which is constant in $n$.
\end{enumerate}
Assume the base measure has a %strictly positive and twice continuously differentiable 
bounded, differentiable density $g(b)$ which is strictly positive in a neighborhood of $b^\star$. %, and define $h(b) = \log(g(b))$. 
Then the sanitized value $\tilde b$ drawn from the exponential mechanism with privacy parameter $\ep$ is asymptotically normal
\[
\sqrt{n}(\tilde b -\hat b) \overset{D}{\to} N_p\l(0, \l(\frac {2\De} \ep\r)\Sigma\r).
\]
\end{thm}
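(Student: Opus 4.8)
The plan is to turn the heuristic Laplace approximation of the exponential-mechanism density into a rigorous statement via Scheff\'e's lemma. Write $c:=\ep/(2\De)$. Because the base measure has a density $g$ with respect to Lebesgue measure on $\mathbb R^d$, the sanitized draw $\tilde b$ has a Lebesgue density proportional to $e^{c\,\xi_n(b)}g(b)$. Assumption~(1) makes $-n^{-1}\xi_n$ strongly convex on all of $\mathbb R^d$, so $\hat b$ is its unique minimizer, $\xi_n'(\hat b)=0$, and one gets the envelope $\xi_n(b)\le \xi_n(\hat b)-\tfrac{n\alpha}{2}\|b-\hat b\|^2$, which together with boundedness of $g$ makes the normalizing integral finite.

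First I would rescale by setting $u=\sqrt n\,(b-\hat b)$, so that $u_n:=\sqrt n(\tilde b-\hat b)$ has Lebesgue density
\[
p_n(u)\ \propto\ g\!\Big(\hat b+\tfrac{u}{\sqrt n}\Big)\,\exp\!\big[-c\,q_n(u)\big],
\qquad q_n(u):=\xi_n(\hat b)-\xi_n\!\Big(\hat b+\tfrac{u}{\sqrt n}\Big)\ \ge\ 0 .
\]
A second-order Taylor expansion of $\xi_n$ about $\hat b$, using $\xi_n'(\hat b)=0$, gives $q_n(u)=\tfrac12 u^{\top}\big[-n^{-1}\xi_n''(\zeta_{n,u})\big]u$ for some $\zeta_{n,u}$ on the segment from $\hat b$ to $\hat b+u/\sqrt n$. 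For each fixed $u$ we have $\zeta_{n,u}\to b^\star$, so assumption~(2) (in a locally uniform reading, see below) yields $q_n(u)\to\tfrac12 u^{\top}\Sigma^{-1}u$, while continuity and positivity of $g$ at $b^\star$ give $g(\hat b+u/\sqrt n)\to g(b^\star)>0$. Hence the unnormalized densities converge pointwise to $g(b^\star)\exp[-\tfrac{c}{2}u^{\top}\Sigma^{-1}u]$.

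Next I would handle the tails and the normalizing constant uniformly in $n$. Strong convexity gives $q_n(u)\ge\tfrac{\alpha}{2}\|u\|^2$ for every $n$, so the unnormalized densities are dominated by the fixed integrable function $\|g\|_\infty\exp[-\tfrac{c\alpha}{2}\|u\|^2]$. Dominated convergence then shows the normalizing integrals converge to $g(b^\star)\int_{\mathbb R^d}\exp[-\tfrac{c}{2}u^{\top}\Sigma^{-1}u]\,du>0$, so $p_n(u)$ converges pointwise to the $N_d\!\big(0,\tfrac{2\De}{\ep}\Sigma\big)$ density --- a Gaussian-integral check identifies the precision matrix $c\,\Sigma^{-1}$ with the inverse of $\tfrac1c\Sigma=\tfrac{2\De}{\ep}\Sigma$. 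Since the $p_n$ and the limit are all probability densities, Scheff\'e's lemma upgrades pointwise convergence to convergence in total variation, and in particular $\sqrt n(\tilde b-\hat b)\xrightarrow{D}N_d\!\big(0,\tfrac{2\De}{\ep}\Sigma\big)$.

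The main obstacle is the pointwise limit of $q_n$: assumption~(2) as stated only asserts convergence of the normalized Hessian at the single moving point $\hat b$, whereas the Taylor remainder is evaluated at $\zeta_{n,u}$, which is only guaranteed to lie within $O(n^{-1/2})$ of $\hat b$. Closing this gap requires some control of the maps $b\mapsto -n^{-1}\xi_n''(b)$ near $b^\star$ that is uniform in $n$ --- most cleanly, reading assumption~(2) as the standard locally uniform convergence $-n^{-1}\xi_n''(b)\to\Sigma^{-1}$ on a neighborhood of $b^\star$, or positing an equicontinuity/Lipschitz bound on the normalized Hessians there. Under any such regularity the argument above goes through verbatim; everything else is routine bookkeeping with Gaussian integrals.
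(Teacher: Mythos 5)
Your proposal follows essentially the same route as the paper's proof: rescale to $u=\sqrt n(b-\hat b)$, Taylor-expand $\xi_n$ about $\hat b$ using $\xi_n'(\hat b)=0$, dominate the unnormalized densities by $\|g\|_\infty e^{-c\alpha\|u\|^2/2}$ via strong convexity, pass the normalizing constants to the limit by dominated convergence, and finish with Scheff\'e. The gap you flag --- that assumption (2) controls the Hessian only at the single point $\hat b$ while the Taylor remainder sits at an intermediate point, so one needs locally uniform convergence of $-n^{-1}\xi_n''$ near $b^\star$ --- is present in the paper's proof as well (it simply writes the remainder as $o(1)$), and your reading of (2) as locally uniform convergence is the natural fix; note also that your limiting covariance $\tfrac{2\De}{\ep}\Sigma$ matches the theorem statement, whereas the final display of the paper's proof inverts the constant by an apparent typo.
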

\begin{proof}
The density of the exponential mechanism can be expressed as
\[
f_X(b) = c_n^{-1} \exp \left \{\frac \ep {2\De} \xi_X(b)\right \} g(b),
\]
where $c_n$ is the normalizing constant.  
Define the random variable $Z = \sqrt{n}(\tilde b - \hat b)$, then its density is given by
\[
f_n(z) = c_n^{-1} n^{-1/2} g(\hat b + z/\sqrt{n}) \exp \left \{ \frac{\ep}{2\De}\xi_n(\hat b + z/\sqrt{n})\right \}.
\]
We now aim to show that, for $z$ fixed, the density converges to a multivariate normal.  
Using a two term Taylor expansion, we have by Assumption (2) and (3) that
\begin{align*}
 \xi_X(\hat b + z/\sqrt{n})=   [ \xi_X(\hat b) + z^\top \xi_X'(\hat b) / \sqrt{n}+ z^\top \xi_X''(\hat b) z / 2n] + o(1).
\end{align*}
The first term will be absorbed into the constants, since it does not depend on $z$, while the second term is zero for $n$ large, leaving only the third term to contribute to the form of the density.  
Obviously $|g(\hat b + z/\sqrt{n}) - g(b^\star)| \to 0$, so the only remaining task is to show that the combined constants behave appropriately.  Recall that
\begin{align*}
 &c_n n^{1/2} \exp\left\{- \frac{\epsilon}{2\Delta}\xi_n(\hat b)\right\}  =  \int_{B_n} g(\hat b + z/\sqrt{n}) \exp \left \{ \frac{\ep}{2\De}[\xi_n(\hat b + z/\sqrt{n}) - \xi_n(\hat b)]\right \} \ dz.
\end{align*}
By Assumption (1) we have that
\[
\xi_X(\hat b +z/\sqrt{n})- \xi_n(\hat b)
\leq   - \frac{\alpha}{2 } \| z\|^2.
\]
Since $\exp\{-\|z\|^2\}$ is integrable, we can apply the dominated convergence theorem to conclude that the constants converge to something nonzero as well.

Putting everything together, we can conclude that
\begin{align*}
f_n(z) \to f(z) \propto \exp\left \{-\frac\ep{2\De}
z^\top \Sigma^{-1} z/2 
\right\}.
\end{align*}
which, is the density of the multivariate normal.  Applying Scheffe's Theorem, we thus have both convergence in distribution as well as convergence in total variation:
\[
\sqrt{n}(\tilde b - \hat b)
\overset{D}{\to} N_p\l(0, \frac\ep{2\De}\Sigma\r).  
\qedhere\]
\end{proof}

The previous result shows that under common conditions, the noise added by the exponential mechanism is of order $O(1/\sqrt n)$. We know by the theory of M-estimators that the non-private solution to the objective functions $\hat b$ also converges at rate $O(1/\sqrt n)$. So, we have that the use of the exponential mechanisim in such cases preserves the $1/\sqrt n$ convergence rate, but with a sub-optimal asymptotic variance. This means that asymptotically, to achieve the same performance as the non-private estimator, the exponential mechanism requires $k$ times as many samples, where $k$ is some constant larger than $1$, which depends on $\ep$ and $\De$. However, we know that for many problems, it is possible to construct DP mechanisms which only introduce $O(1/n)$ noise, thus having equivalent asymptotics to the non-private estimator
(e.g. \citealp{Smith2011:Privacy-preservingSE,Awan2018:Binomial}). Even though in these settings, the noise is asymptotically negligible, developing accurate approximations is still a challenge, which \citet{Wang2018} recently tackled.

In the next result, we extend Theorem \ref{thm:utility1} from $\RR^p$ to Hilbert spaces. However, we currently only consider base measures which are Gaussian processes.

\begin{thm}[Utility of Exp Mech]\label{thm:utility2}
Suppose that the observed record, $X_1,\dots,X_n$, and objective function $\xi_X(b)$, for $b \in \mathcal H$ satisfy
\begin{enumerate}
    %\item $\hat b = \arg \max \xi_X(b)$ is unique
    %\item $n^{-1} \xi_X(b)$ is twice continuously  differentiable at $\hat b$  (uniformly in $n$, i.e. twice equidifferentiable).
    %\item $(-n^{-1} \xi_x(\hat b)'')^{-1} \to \Sigma$ where $\Sigma$ is a full rank, nuclear, positive definite operator, and convergence occurs in the space of nuclear operators.
    \item $-n^{-1}\xi_n(b)$ %is bounded from above (across both $X$ and $b$) and $\hat b = \arg \max_{b\in \RR^p} \xi_X(b)$ exists and is unique;
    are twice differentiable convex functions and there exists a finite %$\alpha \in \RR^+$
    $\alpha>0$ 
    such that the eigenvalues of $-n^{-1} \xi_n(b)''$ are greater than $\alpha$ for all $n$ and $b\in \mathcal H$;
    %\item $n^{-1} \xi_X(b)$ is twice continuously  differentiable at $\hat b$  (uniformly in $n$, i.e. twice equidifferentiable);
    \item the minimizers satisfy $\hat b \to b^\star \in \mathcal H$ and $-n^{-1} \xi_n(\hat b)''^{-1} \to \Sigma$ where $\Sigma$ is positive definite trace class operator (and convergence is wrt this space);
    \item $\xi_n$ has finite sensitivity $\De$, which is constant in $n$.
\end{enumerate}
Assume the base measure is taken to be a Gaussian process, $\nu \sim N_{\mathcal H}(0,C)$, such that $C^{-1}\Sigma$ is bounded. 
%{\red (JA: I think its all we need)}. 
Then the sanitized estimate $\tilde b$ is asymptotically normal
\[
\sqrt{n}\l(\tilde b -\hat b\r) \overset{D}{\to} N_{\mathcal H}\l(0, \frac{2\Delta}{\ep}\Sigma\r).
\]
\end{thm}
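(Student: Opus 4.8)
The plan is to follow the architecture of the proof of Theorem~\ref{thm:utility1}, making two substitutions forced by the infinite-dimensional setting. First, every Lebesgue-density manipulation is replaced by one with the Gaussian base measure $\nu = N_{\mathcal H}(0,C)$, using its covariance structure directly; note that the $\sqrt n$-rescaling no longer acts nicely on a fixed reference measure (rescaling a Gaussian covariance produces a mutually singular measure, by Feldman--H\'ajek), so the argument must be phrased in terms of the measures themselves rather than their densities against a fixed reference. Second, ``convergence of densities plus Scheff\'e'' is replaced by the standard criterion for weak convergence of laws on a separable Hilbert space: writing $Z_n := \sqrt n(\tilde b - \hat b)$ and $\mu_n$ for its law, it suffices to prove (i) pointwise convergence of characteristic functionals, $\int \exp(i\langle z, h\rangle)\, d\mu_n(z) \to \exp\{-\tfrac{\De}{\ep}\langle \Sigma h, h\rangle\}$ for each $h \in \mathcal H$, and (ii) uniform tightness of $\{\mu_n\}$; then Prokhorov's theorem and uniqueness of the characteristic functional give $\mu_n \Rightarrow N_{\mathcal H}(0, \tfrac{2\De}{\ep}\Sigma)$. (Alternatively, one can run the Scheff\'e argument of Theorem~\ref{thm:utility1} on each finite-dimensional projection $P_d\mu_n$ and combine it with (ii).)

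For the core estimate, the density of $\tilde b$ with respect to $\nu$ is $c_n^{-1}\exp\{\tfrac{\ep}{2\De}\xi_n(b)\}$, so substituting $b = \hat b + z/\sqrt n$ shows $\mu_n$ is proportional, against the pushforward $\nu_n := N_{\mathcal H}(-\sqrt n\,\hat b,\, nC)$, to $R_n(z) := \exp\{\tfrac{\ep}{2\De}[\xi_n(\hat b + z/\sqrt n) - \xi_n(\hat b)]\}$; the factors $\exp\{\tfrac{\ep}{2\De}\xi_n(\hat b)\}$ cancel in the normalization. Since $\hat b$ is a minimizer, $\xi_n'(\hat b) = 0$, so a two-term Taylor expansion together with Assumption~(2) gives $R_n(z) \to \exp\{-\tfrac{\ep}{4\De}\langle z, \Sigma^{-1} z\rangle\}$ for each fixed $z$, while the global strong convexity of Assumption~(1) gives the $n$-free envelope $R_n(z) \le \exp\{-\tfrac{\ep\alpha}{4\De}\|z\|^2\} \le 1$, exactly as in Theorem~\ref{thm:utility1}. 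Setting $A_n := -n^{-1}\xi_n''(\hat b)$ and completing the square against the Gaussian $\nu_n$, the tilted measure $R_n\, d\nu_n$ is --- up to the vanishing Taylor remainder --- Gaussian with precision operator $(nC)^{-1} + \tfrac{\ep}{2\De}A_n$. The hypothesis that $C^{-1}\Sigma$ is bounded, together with $A_n^{-1} \to \Sigma$, is what forces the base-measure (``prior'') term $(nC)^{-1}$ to be asymptotically negligible relative to $\tfrac{\ep}{2\De}A_n$; hence the covariance converges to $\tfrac{2\De}{\ep}\Sigma$ and the mean functional $h \mapsto \int \langle z, h\rangle\, d\mu_n(z)$ is $O(n^{-1/2})$ because it is, to leading order, $-\tfrac{2\De}{\ep\sqrt n}\langle \hat b, C^{-1}\Sigma h\rangle$ with $C^{-1}\Sigma h \in \mathcal H$ and $\|\hat b\|$ bounded. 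This gives (i).

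For tightness (ii), take $\{e_k\}$ to be the eigenbasis of $\Sigma$. The same completion of the square gives $\int \langle z, e_k\rangle^2\, d\mu_n(z) = \langle \tilde K_n e_k, e_k\rangle + O(1/n)$ with $\tilde K_n := ((nC)^{-1} + \tfrac{\ep}{2\De}A_n)^{-1}$; since $A_n^{-1} \to \Sigma$ in trace norm and $(nC)^{-1}$ is negligible as above, $\tilde K_n \to \tfrac{2\De}{\ep}\Sigma$ in trace norm, and since $\tr \Sigma < \infty$ the tail sums $\sum_{k > d} \langle \tilde K_n e_k, e_k\rangle$ are uniformly small in $n$ for $d$ large. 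That is precisely the standard sufficient condition for tightness on a separable Hilbert space, and with (i) it finishes the argument.

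The step I expect to be the main obstacle is the one that is trivial in finite dimensions: making the Laplace approximation of $\int R_n\, d\nu_n$ both rigorous and \emph{uniform}. The marginal $P_d\mu_n$ is not itself an exponential-mechanism density with a clean objective --- integrating out the orthogonal complement under $\nu_n$ couples all coordinates --- so Theorem~\ref{thm:utility1} cannot be quoted verbatim; instead one must show directly that the quadratic approximation of $\xi_n(\hat b + z/\sqrt n)$ carries a remainder that stays negligible after integration against $\nu_n$, a Gaussian whose mean $-\sqrt n\,\hat b$ and covariance $nC$ are themselves diverging. Equally delicate is the operator-theoretic bookkeeping behind ``$(nC)^{-1}$ is negligible'' and ``$\tilde K_n \to \tfrac{2\De}{\ep}\Sigma$'': these require the convergence $A_n^{-1} \to \Sigma$ to interact cleanly with the unbounded operator $C^{-1}$, which is exactly where the boundedness of $C^{-1}\Sigma$ (and, implicitly, some uniform control of $C^{-1}A_n^{-1}$) is essential. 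Assumption~(1)'s global strong convexity supplies the $n$-free Gaussian envelope used for domination, and $\|C^{-1}\Sigma\| < \infty$ ensures the base measure changes neither the $1/\sqrt n$ rate nor the limiting covariance; getting both to mesh with the infinite-dimensional accounting is the technical heart of the proof.
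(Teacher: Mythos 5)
Your proposal is correct and follows essentially the same route as the paper's proof: change variables so the Gaussian base measure pushes forward to $N_{\mathcal H}(-\sqrt{n}\,\hat b,\, nC)$, Taylor-expand the objective around $\hat b$, complete the square to identify the tilted measure as Gaussian with precision operator $(nC)^{-1} + \tfrac{\ep}{2\De}A_n$, and let the base-measure term vanish using the boundedness of $C^{-1}\Sigma$. The paper's own argument stops at this formal computation (written with ``$\approx$'' signs), so your added characteristic-functional and tightness step --- with tightness obtained from trace-norm convergence of the covariance operators, precisely the issue flagged in the paper's remark following the theorem --- supplies rigor that the published proof only gestures at.
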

\begin{proof}
%{\red JA: Can Matthew check this proof, I modified it to include $\ep$ and $\Delta$. }
The proof would be essentially the same as before, however when changing variables via standardizing, the base measure is no longer Lebesgue and thus the effects of rescaling the base measure cannot be ignored.  Recall there is no translation invariant $\sigma$-finite measure in infinite dimensions.  Consider $Z = \sqrt{n}(\tilde b- \hat b)$ and
\begin{align*}
P(\sqrt{n}(\tilde b- \hat b) \in A) = \int_{ \hat b + A/\sqrt{n}} f_X(b) \ d \nu(b)= n^{-1/2} \int_A f_X(\hat b + z/\sqrt{n}) \ d \nu(\hat b + z/\sqrt{n}).
\end{align*}
The same Taylor expansion arguments from before still apply, however the base measure has now been shifted and scaled.  In particular, if $d \tilde \nu (z) = d \nu(\hat b + z/\sqrt{n})$, then $\tilde \nu$ is the measure of a Gaussian process with mean $-\sqrt{n} \hat b$ and covariance operator $n C$.  So, we have that, up to a normalizing constant
\begin{align*}
P(\sqrt{n}(\tilde b- \hat b) \in A)
&\approx \int_A \exp\l\{ \l\langle z, \frac{\ep}{2\Delta}\xi_X''(\hat b) z \r\rangle /2n \r\} \ d \tilde  \nu(z)\approx \int_A \exp\l\{ -\l\langle z, \frac{\ep}{2\Delta}\Sigma^{-1} z \r\rangle /2 \r\} \ d \tilde  \nu(z).
\end{align*}
However, this is a Gaussian measure with covariance operator $(\frac{\ep}{2\Delta}\Sigma^{-1} + C^{-1}/n)^{-1} $ and mean $- n^{-1/2} (\frac{\ep}{2\Delta}\Sigma^{-1}  + C^{-1}/n)^{-1}C^{-1} \hat b $.  Since $C$ is fixed, the following limits hold
\begin{align*}
\l(\frac{\ep}{2\Delta}\Sigma^{-1} + C^{-1}/n\r)^{-1}  \to \frac{2\Delta}{\ep}\Sigma
\end{align*}
%and 
\begin{align*}
- n^{-1/2} \l(\frac{\ep}{2\Delta}\Sigma^{-1} + C^{-1}/n\r)^{-1}C^{-1} \hat b \to 0.
\end{align*}\qedhere
%as desired.
\end{proof}

\begin{remark}
 The requirement that $C^{-1}\Sigma$ is bounded implies that the base measure is ``rougher'' than the asymptotic distribution of $\hat b$.  One way to view the assumption on  $\xi_X''$ is through tightness.  In particular, if one assumed only that $\Sigma$ was bounded, then the sequence of measures need not be tight and thus one does not get convergence in the ``strong topology" in $\mathcal H$ \citep[Remark 3.3]{billingsley2013convergence,chen1998central}.  However, one could still obtain convergence of properly normalized continuous linear functionals. 
% {\red (To Matthew: check that this remark is still correct, now that we use $C^{-1} \Sigma$)}
\end{remark}

\begin{example}
Consider $X_1,\dots, X_n \in \mathcal H$ are drawn from a Gaussian process with mean $\mu_X$ and covariance operator $C_X$.  Consider estimating $\mu_X$ using the target function
\[
-\xi_X(b) = \sum_{i=1}^n \| X_i - b\|^2.
\]
Assume that the $\|X_i\| \leq 1$ and thus we need only consider $\| b \| \leq 1$.  In that case, the sensitivity is bounded by 4.  However, for this target function the exponential mechanism will not be asymptotically Gaussian (in the strong topology).  If we consider the second derivative we have
\[
-\xi_X''(b) = 2 n I,
\]
and thus $(-\xi_X''(b)/n)^{-1} = (1/2) I$, which is not a nuclear operator in infnite dimensions.  However, if instead we consider the penalized version
\[
-\xi_X(b) = \sum_{i=1}^n \| X_i - b\|^2 + n \lambda \| b\|^2_{C},
\]
where $\| b\|^2_C = \langle b, C^{-1} b\rangle$, then the sensitivity is the same, but the second derivative is now
\[
-\xi_X''(b) = 2 n I + 2 n \la C^{-1},
\]
which satisfies the assumptions of Theorem \ref{thm:utility2}.  In this case, we can now take our Guassian mechanism to be a mean zero Gaussian process with covariance $C$.
%{\red Should we spell out exactly what the mechanism is? What is the base measure?}
\end{example}

We stress that, in finite samples, there is no issue related to privacy even when $\xi_X''(b)^{-1}$ is not nuclear since we are assuming the mechanism is defined using a value probability distribution as the base measure.  What the previous results and this example illustrate is that there is a price to pay for using such a flexible mechanism.  In the ``good" case, when the assumptions of Theorem \ref{thm:utility2} are met, one has an asymptotically non-negligle noise, but in the "bad" case, the noise can be even larger, since the covariance operator can blow up. 

\section{DP Functional Principal Components}%under DP}
\label{s:pca}
In this section, we apply the exponential mechanism to the problem of producing private functional principal component analysis (FPCA).

Let $(\mathcal H, \langle\cdot, \cdot \rangle)$ be a Hilbert Space. Let $X \in \mathcal H^n$ be such that its components satisfy $\lVert X_i \rVert \leq 1$ for all $i=1,\ldots, n$. Call $\hat S(X)$ the $k$-dimensional subspace of $\mathcal H$ given by the span of the first $k$ principal components of $X$. Let $P_{\hat S(X)}: \mathcal H \rightarrow \mathcal H$ denote the projection operator of $\mathcal H$ onto $\hat S(X)$. We can write $P_{\hat S(X)}$ as the solution to the optimization problem 
\begin{equation}\label{eq:pca}
P_{\hat S(X)} = \arg\min_{P \in \mscr P_k} \sum_{i=1}^n \lVert X_i - PX_i \rVert^2,
\end{equation}
where $\mscr P_k$ is the set of projection operators $P: \mathcal H \rightarrow \mathcal H$ of rank $k$. Equivalently, we can write 
\[P_{\hat S(X)} = \arg\max_{P \in \mscr P_k} \sum_{i=1}^n \lVert PX_i \rVert^2.\]
More specifically, in this section, we develop a set of probability measures $\mscr M$ on $\mscr P_k$, indexed by $\mathcal H^n$, which satisfy $\ep$-DP, and such that a random element $P$ from $\mu_X \in \mscr M$ is ``close'' to $P_{\hat S(X)}$.

Our approach follows that of \citet{Chaudhuri2013}. We take our objective function to be $\xi:\mathcal X^n \times \mscr P_k \rightarrow \RR$ defined by $\xi_X(P) = \sum_{i=1}^n \lVert PX_i \rVert^2$. Note that $\Delta_{\xi}=1$, since $\lVert PX_i \rVert^2\leq\lVert X_i \rVert^2\leq 1$ for any $P\in \mscr P_k$ and any $i=1,\ldots, n$. Since $\sum_{i=1}^n \lVert PX_i \rVert^2\leq n$, for any probability measure $\nu$ on $\mscr P_k$, the class of densities on $\mscr P_k$ with respect to $\nu$ given by
\[f_X(P) \propto \exp\l(\frac{\ep}{2} \sum_{i=1}^n \lVert PX_i \rVert ^2\r), \text{ satisfies $\ep$-DP.}\]

If $\mathcal H$ is finite dimensional, then $\mscr P_k$ is a compact subset of the space of linear operators (e.g. matrices when $\mathcal H = \mathbb R^p$). In that case, there exists a uniform distribution on $\mscr P_k$. In \citet{Chaudhuri2013}, they implement the exponential mechanism as above, with respect to the uniform distribution on $\mscr P_k$.

For arbitrary $\mathcal H$, $\mscr P_k$ is not compact, so we must find another base measure on $\mscr P_k$. To understand our proposed construction, we again consider the finite dimensional $\mathcal H$. Let $P \sim \mathrm{Unif}(\mscr P_k)$, that is $P$ is drawn from the uniform distribution on $\mscr P_k$. Let $V_1,\ldots, V_k \iid N(0, I)$, be iid multivariate normal with mean zero and identity covariance matrix. Then $P \overset d = \mathrm{Projection}(\vspan(V_1,\ldots, V_k))$ (since $V_k$ is invariant under rotations). From this factorization, a natural extension for arbitrary $\mathcal H$ becomes clear. Let $V_1,\ldots, V_k \iid N_{\mathcal H}(0, \Sa)$, be iid Gaussian processes in $\mathcal H$ with zero mean and covariance operator $\Sa$.  Note that $\Sa$ must be positive definite and  trace class, which excludes the identity when $\mathcal H$ is infinite dimensional %{\red (CITE Laha and Rohatgi: which paper/book are we citing?)}
\citep{bogachev1998gaussian}.  We can also tailor $\Sigma$ to instill certain properties such as smoothness or periodicity. Then set $P = \mathrm{Projection}(\vspan(V_1,\ldots, V_k))$. This procedure induces in a probability measure on $\mscr P_k$, which we call $\nu$.

\begin{thm}%[Matt's Version]
\label{ExpMechHilbert_v2}
	Let $\mathcal H$ be a real separable Hilbert Space and $\mathscr P_k$ the collection of all $k$-dimensional projection operators over $\mathcal H$.  Let $\nu$ be the probability measure over $\mscr P_k$ induced by the transformation Projection(span($V_1,\dots,V_k$)), where $V_i \in \mathcal H$ are iid Gaussian process with mean 0 and covariance operator $\Sa$.  Let $\mscr M$ be the class of probability measures on $\mscr P_k$ with densities
	\[f_X(P) \propto \exp\l(\frac{\ep}{2} \sum_{i=1}^n \lVert P X_i\rVert^2\r)\]
	with respect to $\nu$. Then $\mscr M$ satisfies $\ep$-DP.
\end{thm}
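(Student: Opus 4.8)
The plan is to obtain the result as an instance of the general exponential mechanism, Proposition~\ref{ExponentialMechanism}, applied with output space $\mscr P_k$, objective functions $\xi_X(P) = \sum_{i=1}^n \lVert P X_i\rVert^2$, and base measure $\nu$. Three points must be verified: that $\nu$ really is a probability measure on a suitable measurable space $(\mscr P_k,\mscr F)$; that the family $\{\xi_X\}$ has sensitivity $\De_\xi = 1$; and that the normalizing integral $\int_{\mscr P_k}\exp(\tfrac{\ep}{2}\xi_X(P))\,d\nu(P)$ is finite. Because $\De_\xi = 1$, the exponent $\tfrac{\ep}{2}\xi_X$ coincides with the $\tfrac{\ep}{2\De_\xi}\xi_X$ required by Proposition~\ref{ExponentialMechanism}, so once these three points are settled the conclusion follows immediately (and is equivalent, via Proposition~\ref{DPDensity}, to the density inequality $f_X \le e^\ep f_{X'}$ for adjacent $X,X'$).

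To make $\nu$ precise, I would take $\mscr F$ to be the $\sigma$-algebra on $\mscr P_k$ generated by the evaluation maps $P \mapsto \langle P x, y\rangle$, $x,y\in\mathcal H$; since $\mathcal H$ is separable this is a countably generated structure, and each $\xi_X$ is $\mscr F$-measurable because $\lVert P X_i\rVert^2 = \langle P X_i, X_i\rangle$ for an orthogonal projection $P$. Let $\mathcal L\subseteq\mathcal H^k$ be the (open) set of linearly independent $k$-tuples and define $\Phi:\mathcal L\to\mscr P_k$ by sending $(v_1,\dots,v_k)$ to the orthogonal projection onto $\vspan(v_1,\dots,v_k)$; running Gram--Schmidt shows that $(v_1,\dots,v_k)\mapsto\langle\Phi(v_1,\dots,v_k)x,y\rangle$ is continuous on $\mathcal L$ for each $x,y$, so $\Phi$ is $\mscr F$-measurable. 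The key remaining fact is that $(V_1,\dots,V_k)\in\mathcal L$ almost surely: since $\Sa$ is positive definite, a non-degenerate Gaussian measure on $\mathcal H$ assigns zero mass to every proper closed subspace, so conditioning on $V_1,\dots,V_{j-1}$ and applying this to the finite-dimensional subspace $\vspan(V_1,\dots,V_{j-1})$ gives $P(V_j\in\vspan(V_1,\dots,V_{j-1})) = 0$ for each $j\le k$ (an induction, with base case $P(V_1 = 0)=0$). Hence $\nu := \Phi_{\#}\bigl(N_{\mathcal H}(0,\Sa)^{\otimes k}\bigr)$ is a well-defined probability measure on $(\mscr P_k,\mscr F)$, and the densities $f_X$ are measurable.

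The remaining two points are routine. For the sensitivity, if $X,X'$ are adjacent, say $X_i = X_i'$ for $i\neq j$, then $|\xi_X(P)-\xi_{X'}(P)| = \bigl|\lVert P X_j\rVert^2 - \lVert P X_j'\rVert^2\bigr| \le \max\{\lVert X_j\rVert^2,\lVert X_j'\rVert^2\}\le 1$, using $\lVert P\rVert\le 1$ and $\lVert X_i\rVert\le 1$; thus we may take $\De_\xi = 1$. For integrability, $0\le\xi_X(P)\le n$ on all of $\mscr P_k$ and $\nu$ is a finite measure, so $\int_{\mscr P_k}\exp(\tfrac{\ep}{2}\xi_X(P))\,d\nu(P)\le e^{\ep n/2}<\infty$ --- precisely the finite-measure/bounded-objective situation noted after Proposition~\ref{ExponentialMechanism}. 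Proposition~\ref{ExponentialMechanism} then yields that $\mscr M$ satisfies $\ep$-DP.

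The substantive obstacle is entirely in the second paragraph: showing that $\nu$ is a genuine Borel probability measure on $\mscr P_k$. Its two ingredients --- that a positive-definite-covariance Gaussian measure on $\mathcal H$ annihilates proper closed subspaces (so $V_1,\dots,V_k$ are a.s.\ linearly independent and $\Phi$ is defined on a full-measure set) and that the orthogonal-projection map $\Phi$ is measurable on the set of independent tuples --- are standard but rely on the separability of $\mathcal H$ and the nondegeneracy of $\Sa$. Everything downstream, namely the sensitivity bound and the integrability of the normalizer, is immediate from $\lVert X_i\rVert\le 1$ and the finiteness of $\nu$.
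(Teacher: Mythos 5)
Your proposal is correct and follows essentially the same route as the paper, which proves this result simply by noting that $\De_\xi = 1$, that $\xi_X(P)\le n$, and that $\nu$ is a probability (hence finite) measure, so Proposition~\ref{ExponentialMechanism} applies directly. Your second paragraph supplies measure-theoretic details (the $\sigma$-algebra on $\mscr P_k$, measurability of the projection map, and a.s.\ linear independence of the $V_i$) that the paper simply asserts when it says the construction ``induces a probability measure on $\mscr P_k$''; this is a welcome tightening rather than a different argument.
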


\begin{thm}\label{ExpMechHilbert}
  Let $\mathcal H$ be a Hilbert Space,  $k<n$ be two positive integers, and $\mscr M$ be the class of probability measures on $\mathcal H^k$ with densities $f_X(V_1,\ldots, V_k)$ proportional to 
\[  \exp\l(\frac{\ep}{2} \sum_{i=1}^n \lVert \mathrm{Projection}(\vspan(V_1,\ldots, V_k))X_i\rVert^2\r)\]
with respect to $\nu$ (the measure induced by the Gaussian distribution $N^k(0, \Sa)$) on $\mathcal H^k$. 
Then $\mscr M$ satisfies $\ep$-DP. 
\end{thm}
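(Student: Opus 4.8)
The plan is to read this off directly from the exponential mechanism, Proposition~\ref{ExponentialMechanism}, taking the output space to be $\mathcal H^k$, the base measure to be $\nu=N^k(0,\Sa)$ (the $k$-fold product of the Gaussian process $N_{\mathcal H}(0,\Sa)$), and the objective function to be
\[
\xi_X(V_1,\dots,V_k)\;=\;\sum_{i=1}^n\bigl\lVert \mathrm{Projection}\bigl(\vspan(V_1,\dots,V_k)\bigr)X_i\bigr\rVert^2,
\]
writing $P_V$ for $\mathrm{Projection}(\vspan(V_1,\dots,V_k))$. Three things must be checked before Proposition~\ref{ExponentialMechanism} applies: that $\xi_X$ is $\nu$-measurable, that the family $\{\xi_X:X\in\mathcal X^n\}$ has sensitivity $\Delta_\xi=1$, and that $\int_{\mathcal H^k}\exp(\xi_X)\,d\nu<\infty$ for every $X$.

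For the sensitivity, if $X$ and $X'$ are adjacent, differing only in the $j$th record, then $\xi_X(V)-\xi_{X'}(V)=\lVert P_VX_j\rVert^2-\lVert P_VX_j'\rVert^2$; since $P_V$ is an orthogonal projection we have $0\le\lVert P_VX_j\rVert^2\le\lVert X_j\rVert^2\le 1$ and likewise for $X_j'$, so $\lvert\xi_X(V)-\xi_{X'}(V)\rvert\le 1$ for \emph{every} $V\in\mathcal H^k$; hence $\Delta_\xi=1$, and the density in the statement is exactly $f_X(V)\propto\exp\{(\ep/2\Delta_\xi)\,\xi_X(V)\}$. For integrability, $0\le\xi_X(V)\le n$ pointwise and $\nu$ is a probability measure, so $\int\exp(\xi_X)\,d\nu\le e^n<\infty$; this is the finite-base-measure observation recorded right after Proposition~\ref{ExponentialMechanism}. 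Measurability of $\xi_X$ is the only point needing a word of care: on the $\nu$-conull event that $V_1,\dots,V_k$ are linearly independent (which holds because $\Sa$ is positive definite, so each $V_i$ is nondegenerate and a finite iid collection of nondegenerate Gaussian elements is almost surely linearly independent), the Gram matrix $G(V)_{jl}=\langle V_j,V_l\rangle$ is invertible and $P_V=\sum_{j,l}(G(V)^{-1})_{jl}\langle\,\cdot\,,V_l\rangle V_j$ depends continuously on $(V_1,\dots,V_k)$; composing with $v\mapsto\lVert v\rVert^2$ and summing shows $\xi_X$ agrees $\nu$-almost everywhere with a measurable function, which is all Proposition~\ref{ExponentialMechanism} requires.

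With these three facts in hand, Proposition~\ref{ExponentialMechanism} applied with $\Delta_\xi=1$ immediately gives that the measures $\mu_X$ with densities $f_X(V_1,\dots,V_k)\propto\exp\bigl(\tfrac{\ep}{2}\sum_{i=1}^n\lVert P_VX_i\rVert^2\bigr)$ with respect to $\nu$ satisfy $\ep$-DP, i.e.\ $\mscr M$ is $\ep$-DP. I do not expect a genuine obstacle here: this is essentially the $\mathcal H^k$-level restatement of Theorem~\ref{ExpMechHilbert_v2}, and the content is entirely in the bounded elementary estimate $\lVert P_VX_i\rVert\le\lVert X_i\rVert\le 1$. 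The one mild subtlety is that the projection operator, and hence $\xi_X$, is only well defined $\nu$-almost everywhere, but since Proposition~\ref{ExponentialMechanism} is already phrased in terms of $\nu$-a.e.\ statements this is harmless.
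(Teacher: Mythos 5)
Your proposal is correct and follows essentially the same route as the paper: the paper's justification (given in the text of Section~\ref{s:pca} preceding the theorem statements) is exactly the observation that $\Delta_\xi=1$ because $\lVert PX_i\rVert^2\le\lVert X_i\rVert^2\le 1$, that integrability is automatic since the objective is bounded by $n$ and $\nu$ is a probability measure, and then an appeal to Proposition~\ref{ExponentialMechanism}. Your additional care about $\nu$-a.e.\ measurability of the projection via the Gram matrix is a detail the paper leaves implicit, and it is handled correctly.
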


As we see in the next section, we will represent the output of Theorem \ref{ExpMechHilbert} as an arbitrary basis for a $k$-dimensional subspace $\twid S$ of $\mathcal H$, which can then be assembled into a projection operator as needed.

\begin{remark}
  In many cases we can interpret $\Sigma$ as instilling some particular structure on $P$ or the $V_i$.  For example, if  $\mcH = L^2[0,1]$, then we could define $\Sa$ using the kernel of an RKHS. The kernel could then be chosen so that $\twid S$ have a certain number of derivatives as many Sobelev spaces are RKHS as well \citep{berlinet2011reproducing}, which is often a natural assumption.
\end{remark}

%%%%%%%%%%%%%%%%%%%%%%%%%%%%%%%%%%%%%%%%%%%%%%%%%%%%%%%%%%%%
%%%%     SAMPLING 
%%%%%%%%%%%%%%%%%%%%%%%%%%%%%%%%%%%%%%%%%%%%%%%%%%%%%%%%%%%%
\section{PCA continued: Sampling}\label{s:sampling}
In the previous section, we developed a set of $\ep$-DP probability measures for arbitrary Hilbert Spaces. However, for these to be of use to us, we need to be able to sample from these probability distributions. As is common in FDA \citep{ramsay2005springer,kokoszka2017introduction} we use finite dimensional approximations via basis expansions for computation.

Let $b_1,b_2,\ldots$ be an orthonormal basis for $\mathcal H$. We will work in the $m$-dimensional subspace $\mathcal H_m = \vspan(b_1,\ldots, b_m)$. Given our observed values $X_i \in \mathcal H$, call $X_{ij} = \langle X_i, b_j\rangle$ for $i=1,\ldots, n$ and $j=1,\ldots,m$. We arrange these real values in an $n \times m$ matrix ${\bf X} = (X_{ij})$.
%\[\bf{X} = \l(\begin{array}{ccc}
%X_{11}&\cdots&X_{1m}\\
%\vdots&\ddots&\vdots\\
%X_{n1}&\cdots&X_{nm}
%\end{array} \r).\]
Next, let $\Sa$ be a trace class covariance operator on $\mathcal H$. Write $\Sa_{ij} = \langle b_i, \Sa b_j\rangle$ for $i,j=1,\ldots, m$. We put these values in an $m \times m$ matrix ${\bf\Sa} = (\Sa_{ij})$,
%\[\bf{ \Sa} = \l( \begin{array}{ccc}
%\Sa_{11}&\cdots&\Sa_{1m}\\
%\vdots&\ddots&\vdots\\
%\Sa_{m1}&\cdots&\Sa_{mm}
%\end{array}\r),\]
which is a positive definite matrix in $\RR^{m\times m}$. In this setup, we then draw $(V_1,\ldots, V_k)\in \mathcal H_m$. Call $V_{ij} =\langle V_i,b_j\rangle$, and arrange these values into a real-valued matrix ${\bf V} = (V_{ij})$
%\[\bf{V}= \l( \begin{array}{ccc} V_{11}&\cdots&V_{k1}\\
%\vdots&\ddots&\vdots\\
%V_{1m}&\cdots&V_{km}\end{array}\r).\]
We then draw from the density $f(V)$, with respect to Lebesgue measure on $\RR^{k\times m}$, which is proportional to 
\begin{align*}
%f(V) \propto
\exp\l(\frac{\ep}{2} \tr(X^\top X V(V^\top V)^{-1} V^\top - V^\top \Sa^{-1} V))\r).
\end{align*}
In fact, we can obtain a more convenient form for sampling. Since we only need the span of $V$, we can condition on the columns of $V$ being orthonormal. The density $f(V\mid \text{orthonormal})$, with respect to the uniform measure on the set of orthonormal matrices in $\RR^{m\times k}$, is proportional to 
\[ \exp\l( \frac{\ep}{2} \tr\l( V^\top \l(X^\top X - \Sa^{-1}\r) V\r) \r),\]
which is an instance of the Matrix-Bingham-Von-Mises distribution, for which an efficient Gibbs sampler is known \citep{hoff2009simulation,hoff2018package}. 
%%%%%%%%%%%%%%%%%%%%%%%%%%%%%%%%%%%%%%%%%%%%%%%%%%%%%%%%%%%%
%%%   Numerical Studies
%%%%%%%%%%%%%%%%%%%%%%%%%%%%%%%%%%%%%%%%%%%%%%%%%%%%%%%%%%%%
\section{Numerical Studies}\label{s:numerical}
In this section we assess the performance of the exponential mechanism for private FPCA, as developed in Sections \ref{s:pca} and \ref{s:sampling} on both simulated and real data.
\subsection{Simulation Study}
\begin{figure}
    \centering
	    \includegraphics[scale=0.23]{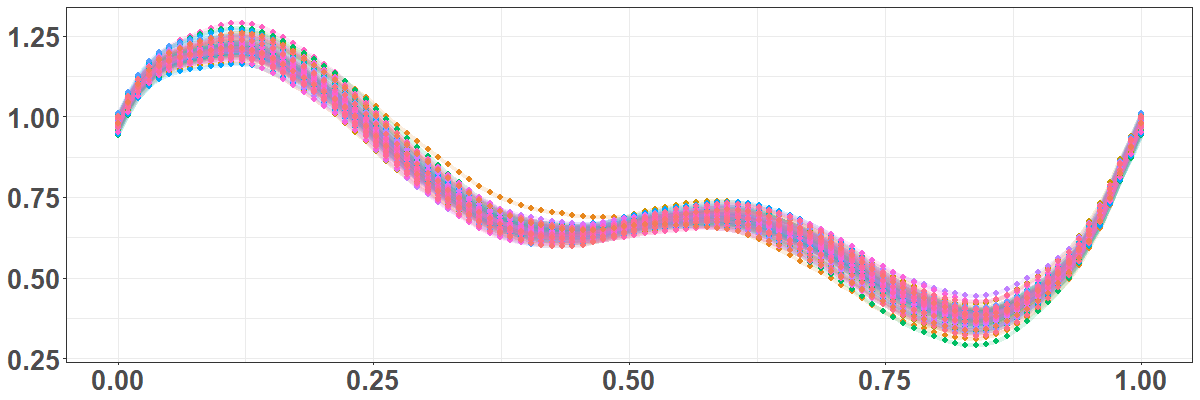}
\caption{Plot of 100 curves generated for the  simulation.}\label{FourierCurves}
\end{figure}
For our simulation study, we generated data on a grid of $100$ evenly spaced points on $[0,1]$ using the Karhunen-Loeve expansion with Gaussian noise added:
\[X_i(t_{ik})=\mu(t_{ik})+\sum_{j=1}^p\frac{1}{j^2}U_{ij}u_j(t_{ik}) + \varepsilon_{ik}, %\ i=1,\dots,n \ k=1,\dots,100,
\]
for $i=1,\dots,n$, $k=1,\dots,100$.  The $u_j(t)$ are the true functional principal components, $\varepsilon_{ik}$ are independent errors sampled from the Gaussian distribution $N(0,1)$, and scores $U_{ij}$ are sampled from $N(0,0.1)$. Note that for each scenario we re-scale the $X_i$ so that $||X_i||^2<1$ for $i=1,...,n$. 

The $u_j(t)$ are comprised of Fourier basis functions and to fully explore the effectiveness of this approach, we vary the sample size $n$, privacy budget $\epsilon$, and repeat each scenario 10 times. Data is generated using $p=21$ true components and additional weights were placed on the fourth term in the Fourier expansion, creating the overall shape shown in Figure~\ref{FourierCurves}. We release only $k=1$ components.

\begin{figure}
	\centering
        \begin{subfigure}[b]{0.45\textwidth}
              \includegraphics[scale=0.2]{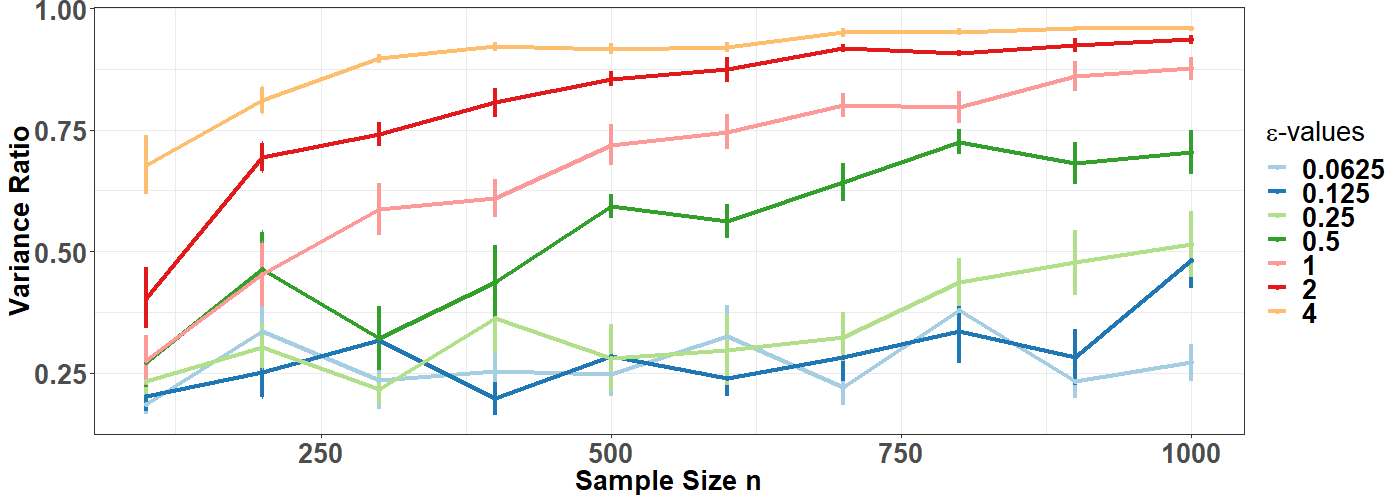}
              \caption{
              Average ratio of variance explained between the private and non-private principal components.
              }\label{VarianceRatioPeriodic}
          \end{subfigure}
          \begin{subfigure}[b]{0.45\textwidth}
              \includegraphics[scale=0.2]{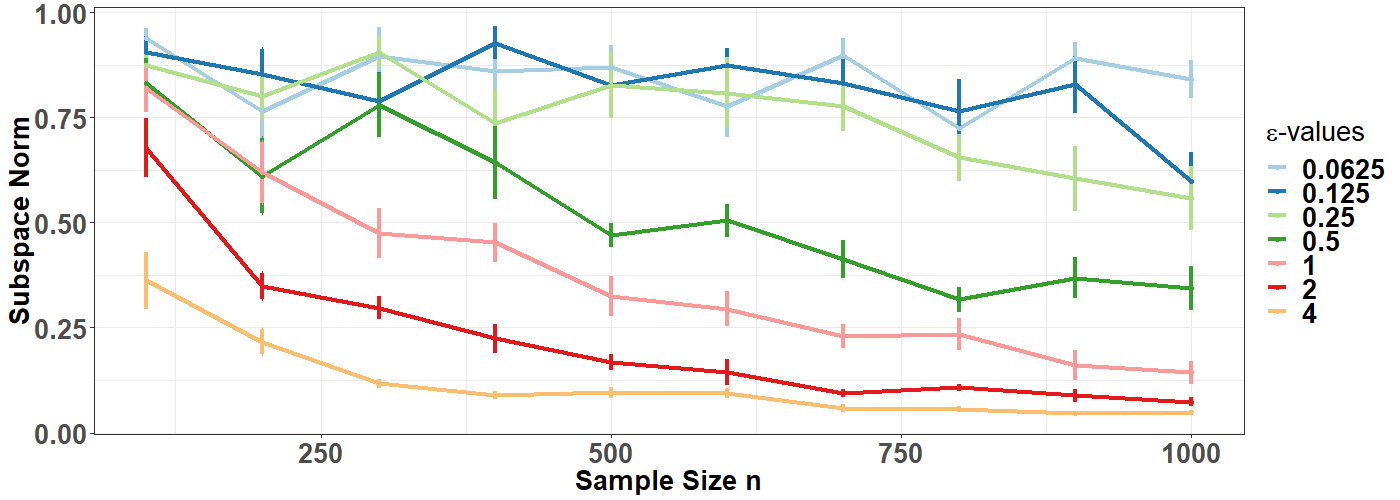}
              \caption{
              Average subspace norm of private principal components. 
              }\label{SubspaceNormPeriodic}
          \end{subfigure}
          \caption{Average performance measurements in simulation scenarios over sample sizes ranging from $n=100$ to $1000$. Standard error bars are provided at each point. %with periodic functions using the exponential mechanism approach to release a private version of the first principal component.
          }
          \label{summarymeasuresperiodic}
\end{figure}
\begin{figure}
      \centering
              \begin{subfigure}[b]{0.45\textwidth}
              \includegraphics[scale=0.2]{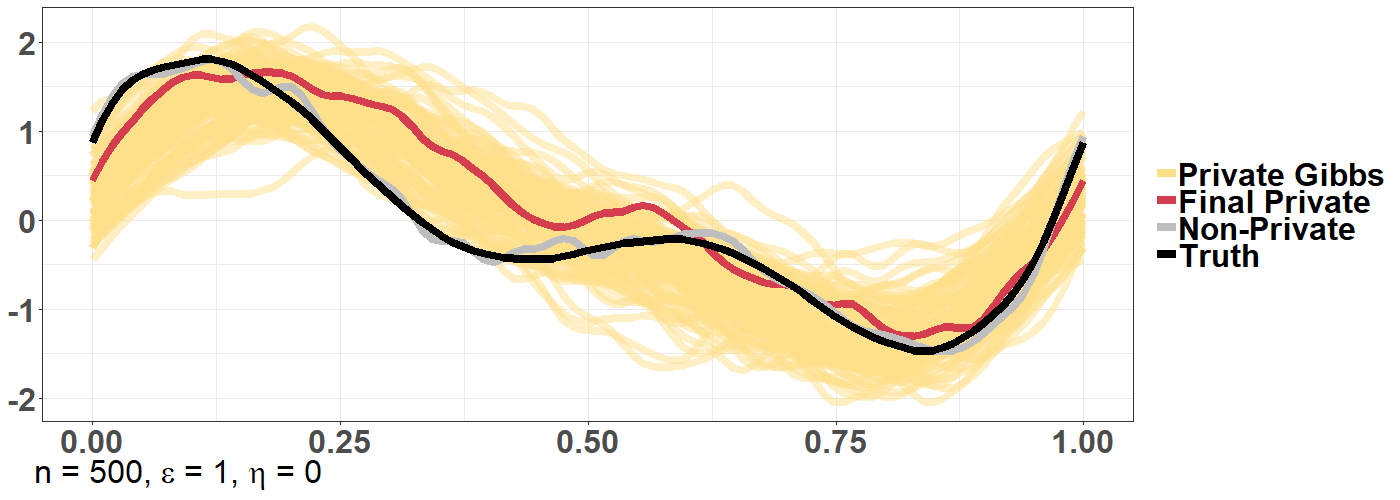}
              \caption{
              One instance of the simulation where $n=500$ and $\epsilon=1$. 
              }\label{n500e1et0periodic}
          \end{subfigure}
          \begin{subfigure}[b]{0.45\textwidth}
              \includegraphics[scale=0.2]{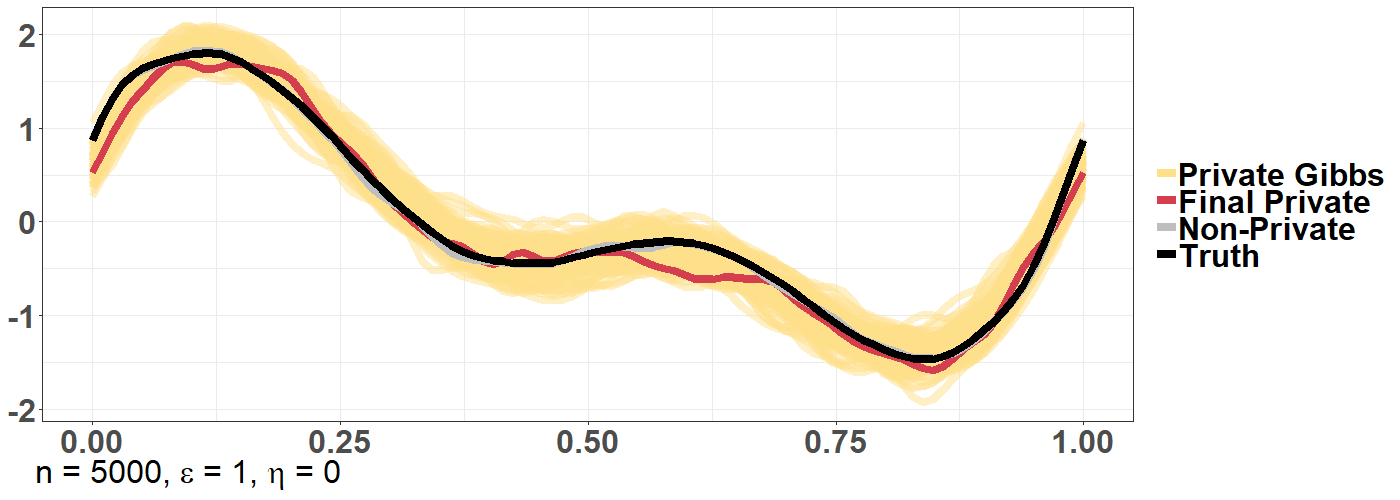}
              \caption{One instance of the simulation where $n=5000$ and $\epsilon=1$}\label{n5000e1et0periodic}
          \end{subfigure}
\caption{Comparisons between the private estimate, non-private estimate, and true first functional principal component. %with periodic functions using the exponential mechanism approach to release a private version of the first principal component.
The last 100 Gibbs updates for the private estimate are provided to demonstrate the variability of the mechanism.
}
\label{summarycurvesperiodic}
\end{figure}

We also specify $m$, the number of orthonormal basis functions $b_i$, when restricting the functional observations to a finite dimensional space and $\Sigma$, a trace class covariance operator on $\mathcal H$. It is common to take $m$ to be some sufficiently large value, usually around 40-50, and for our simulation scenarios we have $m=40$. The choice of $\Sigma$ can vary depending on what structures one may want to induce on the functions (i.e. the number of derivatives). For our choice of $\Sigma$, we take it to be a diagonal matrix with $\Sigma_{ii}=i^{-3}$, which corresponds to requiring that the $V_i$ are continuous. %{\red(JA: Can Matthew or Ana check this?)}
% Thm 1.10
%https://www-m7.ma.tum.de/foswiki/pub/M7/Analysis/Fourier13/lecture9.pdf
Given that the data is periodic, we chose to use the Fourier basis functions as $b_i$. Finally, recall there is an efficient Gibbs sampler for this approach \cite{hoff2009simulation}, which has been implemented in the \texttt{rstiefel} package \cite{hoff2018package} in R.  This also requires a fixed number of iterations as burn-in prior to starting the procedure. Following the computational experiments in \cite{Chaudhuri2013}, we used 20,000 iterations and had similar convergence results.

We provide two measurements of performance to compare the resulting space of orthogonal projection operators. The first compares the ratio of variability accounted for between the private and non-private estimates of the $k$ functional principal components. More explicitly,
\[0 \leq \frac{||X^T\tilde{P}X||_{F}^2}{||X^T\hat{P}X||^2_F} \leq 1,
\]
where $||\cdot||_F$ is the Frobenius norm, $\tilde{P}$ is the projection onto the span of $V$ drawn from the mechanism in Theorem \ref{ExpMechHilbert}, and $\hat{P}$ the the non-private solution to \eqref{eq:pca}. 

The second measure gives an indication of how close the range of $\tilde{P}$ is to $\hat{P}$:
\[0 \leq \frac12{||\tilde{P}-\hat{P}||_F^2}\leq k.
\]
If the range of $\tilde{P}$ and $\hat{P}$ agree in $h$ dimensions and are orthogonal in $k-h$ dimensions, then this measure gives the value $k-h$. So this can be interpreted as roughly the number of dimensions that $\tilde{P}$ and $\hat{P}$ disagree. 

We summarize the results in Figures~\ref{VarianceRatioPeriodic} and~\ref{SubspaceNormPeriodic} over a range of sample size $n$ and privacy budget $\epsilon$. Note that, as expected, larger sample sizes can preserve utility (in terms of the two measurements described previously) for stricter privacy requirements. Additionally, we provide the sanitized curve for the first principal component for one instance of a scenario with a sample size of $n=500$, and $n=5000$, seen in Figures~\ref{n500e1et0periodic} and~\ref{n5000e1et0periodic}. The last 100 Gibbs updates are given as well, demonstrating the variability in each sample size. Note that even with a privacy budget of $\epsilon=1$ and relatively low sample size, the overall shape is captured, but the variance is reduced when $n=5000$.

\subsection{Applications}
\begin{table}
  \centering
\caption{Average performance for private principal components from the Berkeley growth and DTI data sets. Standard errors are provided in parenthesis for reference.}
\label{applications}
\begin{tabular}{rlll}  
\toprule
& \multicolumn{3}{c}{No. of Components ($k$)} \\
\cmidrule(r){2-4}
& \multicolumn{1}{c}{1} 
& \multicolumn{1}{c}{2}
& \multicolumn{1}{c}{3}\\
\midrule
%\textbf{INSIGHT} & \multicolumn{3}{c}{Variance Ratio} \\
%\cdashlinelr{2-4}
%$\epsilon = $ 
%1/16 & 0.242 (.02)	& 0.449 (.026)	& 0.631 (.025) \\
%1/8  & 0.277 (.024)	& 0.446 (.026)	& 0.638 (.025) \\
%1/4  & 0.225 (.023)	& 0.462 (.027)	& 0.637 (.023) \\
%1/2  & 0.254 (.025)	& 0.465 (.028)	& 0.655 (.025) \\
%1    & 0.205 (.022)	& 0.483 (.026)	& 0.672 (.023) \\
%2    & 0.254 (.025)	& 0.515 (.024)	& 0.706 (.019) \\
%4    & 0.342 (.025)	& 0.596 (.024)	& 0.727 (.019) \\

\textbf{Berkeley} & \multicolumn{3}{c}{Variance Ratio} \\
\cdashlinelr{2-4}
%1/16 & 0.34  (.025)	& 0.48  (0.023)	& 0.636  (.022) \\
1/8 & 0.264  (.024)	& 0.494  (.023)	& 0.672  (.020) \\
1/4 & 0.343  (.024)	& 0.523  (.023)	& 0.681  (.020) \\
1/2 & 0.408  (.025)	& 0.523  (.022)	& 0.729  (.019) \\
1 & 0.550  (.025)	& 0.680  (.018)	& 0.775  (.015) \\
2 & 0.743  (.018)	& 0.787  (.012)	& 0.855  (.010) \\
%4 & 0.876  (.01)	& 0.855  (.007)	& 0.906  (.006) \\

\textbf{DTI (\textit{cca})} & \multicolumn{3}{c}{Variance Ratio} \\
\cdashlinelr{2-4}
%1/16 & 0.353 (.025)	& 0.519 (.023)	& 0.709 (.019) \\
1/8 & 0.372 (.025)	& 0.569 (.024)	& 0.727 (.018) \\
1/4 & 0.497 (.026)	& 0.676 (.021)	& 0.811 (.011) \\
1/2 & 0.726 (.020)	& 0.812 (.014)	& 0.876 (.009) \\
1 & 0.879 (.009)	& 0.885 (.007)	& 0.910 (.005) \\
2 & 0.933 (.006)	& 0.928 (.004)	& 0.939 (.003) \\
%4 & 0.968 (.002)	& 0.958 (.002)	& 0.961 (.002) \\
\end{tabular}
\end{table}
\begin{table}
  \centering
\caption{Average performance for private principal components from the Berkeley growth and DTI data sets. Standard errors are provided in parenthesis for reference.}
\label{applications2}
\begin{tabular}{rlll}  
\toprule
& \multicolumn{3}{c}{No. of Components ($k$)} \\
\cmidrule(r){2-4}
& \multicolumn{1}{c}{1} 
& \multicolumn{1}{c}{2}
& \multicolumn{1}{c}{3}\\
\midrule
%\textbf{INSIGHT} %& \multicolumn{3}{c}{Variance Ratio} \\
%\cdashlinelr{2-4}
%& \multicolumn{3}{c}{Subspace Norm} \\
%\cdashlinelr{2-4}
%$\epsilon = $% 1/16 & 0.765 (.02)	    & 1.116 (.033)	& 1.112 (.032) \\
%1/8 & 0.730 (.025)	    & 1.140  (.035)	& 1.127 (.033) \\
%1/4 & 0.784 (.024)	    & 1.153 (.030)	& 1.132 (.031) \\
%1/2 & 0.755 (.025)	    & 1.117 (.032)	& 1.104 (.031) \\
%1   & 0.801 (.022)	    & 1.109 (.034)	& 1.096 (.035) \\
%2   & 0.754 (.025)	    & 1.067 (.031)	& 1.046 (.030) \\
%4   & 0.665 (0.025)	    & 1.05 (.032)	& 1.057 (.032) \\
\textbf{Berkeley}& \multicolumn{3}{c}{Subspace Norm} \\
\cdashlinelr{2-4}
%1/16 & 0.701 (.026)	& 1.105 (.033)	& 1.152 (.03) \\
1/8 & 0.776 (.025)	& 1.115 (.036)	& 1.100 (.034) \\
1/4 & 0.701 (.025)	& 1.046 (.035)	& 1.135 (.030) \\
1/2 & 0.633 (.027)	& 1.063 (.033)	& 1.066 (.030) \\
1 & 0.484 (.027)	& 0.883 (.031)	& 0.962 (.032) \\
2 & 0.275 (.020)	& 0.770 (.032)	& 0.938 (.035) \\
%4 & 0.132 (.01)	& 0.634 (.028)	& 0.866 (.034) \\
\textbf{DTI (\textit{cca})}& \multicolumn{3}{c}{Subspace Norm} \\
\cdashlinelr{2-4}
%1/16 & 0.698 (.027)	& 1.166 (.029)	& 1.153 (.033) \\
1/8 & 0.679 (.026)	& 1.098 (.035)	& 1.074 (.030) \\
1/4 & 0.544 (.029)	& 0.976 (.027)	& 1.079 (.029) \\
1/2 & 0.296 (.021)	& 0.861 (.027)	& 0.982 (.030) \\
1 & 0.131 (.010)	& 0.770 (.026)	& 0.940 (.035) \\
2 & 0.073 (.006)	& 0.640 (.030)	& 0.758 (.035) \\
%4 & 0.035 (.003)	& 0.496 (.03)	& 0.603 (.03) \\
%\bottomrule
\end{tabular}

\end{table}
For the real data application, we applied our procedure to two data sets, %Intervention Nurses Start Infants Growing on Healthy Trajectories or INSIGHT \cite{paul2014intervention} available on dbGaP, 
the Berkeley growth study from the \texttt{fda} package \cite{fdapackage}, and Diffusion Tensor Imaging (DTI) from the \texttt{refund} package \cite{refundpackage}.
%INSIGHT contains the weight and length of 226 children measured at birth, 3-4 weeks, 16 weeks, 28 weeks, 40 weeks, 1 year, and 2 years.
The Berkely data has the heights of 93 children at 31 time points with ages ranging from 1-18.
DTI gives fractional anisotropy (FA) tract profiles for the corpus callosum (CCA) the right corticospinal tract (RCST) for patients with Multiple Sclerosis as well as controls. We focus on the \textit{cca} data, which includes 382 patients measured at 93 equally spaced locations along the CCA.

Results are summarized in Tables~\ref{applications} and \ref{applications2} when releasing 1-3 principal components across a range of privacy budgets and averaging the performance measurements over 100 repetitions of our procedure. For each data set we selected the Gaussian kernel for $\Sigma$ with a smoothness parameter that requires $m=5$ eigenvalues to explain $>$99\% of variation. Its corresponding eigenfunctions were selected for the orthonormal basis $b_i$. Our approach is more effective over the DTI data set, which may be due to the true variation explained by the non-private components. For DTI the cumulative variation is .77, .86, and .93 for the top 1, 2, and 3 components respectively, while Berkeley has 0.82, 0.95, and 0.98. When things are too ``simple", necessary deviations for privacy show more loss in variation explained compared to the non-private estimates. Overall, this still demonstrates the effectiveness of our procedure under different types of real data with smaller sample sizes.

%%%%%%%%%%%%%%%%%%%%%%%%%%%%%%%%%%%%%%%%%%%%%%%%%%%%%%%%%%%%%%%%%%%
%%%         Discussion
%%%%%%%%%%%%%%%%%%%%%%%%%%%%%%%%%%%%%%%%%%%%%%%%%%%%%%%%%%%%%%%%%%%%
\section{Discussion}\label{s:conclusions}
In this paper, we studied the exponential mechanism in the setting of separable Hilbert spaces. We showed that generally when the objective  is an empirical risk function, the exponential mechanism has a CLT implying that asymptotically non-negligible noise is introduced. Since the exponential mechanism is popularly used, this result demands the following question: what properties of the objective function guarantee asymptotically negligible noise?

Through our simulations and applications, we found that the choice of $\Sigma$ can have a significant impact on the result of the private FPCA analysis. In particular, $\Sa$ can be rescaled by any positive constant, which affects the smoothing but does not change the interpretation in terms of number of derivatives. While our approach requires that $\Sa$ is chosen before seeing the data, it would be preferable to have a method of learning $\Sa$ within the DP procedure. Future researchers should investigate effective methods of tuning parameters under DP.

In the data applications, we found that counter-intuitively, our DP FPCA approach performs better when there is more variability in the data. Perhaps this is because our measures of performance are comparing the DP estimates to the non-private estimates, and the variability hurts both. It would be worth while to investigate this further to better understand how variability in the data affects the performance of DP methods.

\section*{Acknowledgments}
This research was supported in part by the following grants to Pennsylvania State University: NSF Grant SES-1534433, NSF Grant DMS-1712826, and NIH Grant 5T32LM012415-03 via the  Biomedical Big Data to Knowledge (B2D2K) Predoctoral Training Program.

%\lstinputlisting{./code/funRegrExamples.R}

%\bibliographystyle{plain} 
%\bibliography{biblio-Apoly} 

\bibliographystyle{icml2019} 
%\bibliography{../../../Bibliography/DataPrivacyBib}{} 
\bibliography{./DataPrivacyBib}{}

\begin{thebibliography}{40}
\providecommand{\natexlab}[1]{#1}
\providecommand{\url}[1]{\texttt{#1}}
\expandafter\ifx\csname urlstyle\endcsname\relax
  \providecommand{\doi}[1]{doi: #1}\else
  \providecommand{\doi}{doi: \begingroup \urlstyle{rm}\Url}\fi

\bibitem[Alda \& Rubinstein(2017)Alda and Rubinstein]{Alda2017}
Alda, F. and Rubinstein, B.~I.
\newblock The bernstein mechanism: Function release under differential privacy.
\newblock In \emph{AAAI}, pp.\  1705--1711, 2017.

\bibitem[Awan \& Slavkovi\'{c}(2018)Awan and Slavkovi\'{c}]{Awan2018:Binomial}
Awan, J. and Slavkovi\'{c}, A.
\newblock Differentially private uniformly most powerful tests for binomial
  data.
\newblock In \emph{Advances in Neural Information Processing Systems 31}, pp.\
  4212--4222. Curran Associates, Inc., 2018.

\bibitem[Awan \& Slavkovi\'c(2018)Awan and Slavkovi\'c]{Awan2018:Structure}
Awan, J. and Slavkovi\'c, A.
\newblock Structure and sensitivity in differential privacy: Comparing $k$-norm
  mechanisms.
\newblock \emph{ArXiv e-prints}, January 2018.
\newblock Submitted.

\bibitem[Berlinet \& Thomas-Agnan(2011)Berlinet and
  Thomas-Agnan]{berlinet2011reproducing}
Berlinet, A. and Thomas-Agnan, C.
\newblock \emph{Reproducing kernel Hilbert spaces in probability and
  statistics}.
\newblock Springer Science \& Business Media, 2011.

\bibitem[Billingsley(2013)]{billingsley2013convergence}
Billingsley, P.
\newblock \emph{Convergence of probability measures}.
\newblock John Wiley \& Sons, 2013.

\bibitem[Blum et~al.(2005)Blum, Dwork, McSherry, and Nissim]{Blum2005}
Blum, A., Dwork, C., McSherry, F., and Nissim, K.
\newblock Practical privacy: the sulq framework.
\newblock In \emph{Proceedings of the twenty-fourth ACM SIGMOD-SIGACT-SIGART
  symposium on Principles of database systems}, pp.\  128--138. ACM, 2005.

\bibitem[Bogachev(1998)]{bogachev1998gaussian}
Bogachev, V.~I.
\newblock \emph{Gaussian measures}.
\newblock Number~62. American Mathematical Soc., 1998.

\bibitem[Canonne et~al.(2018)Canonne, Kamath, McMillan, Smith, and
  Ullman]{Canonne2018}
Canonne, C.~L., Kamath, G., McMillan, A., Smith, A.~D., and Ullman, J.
\newblock The structure of optimal private tests for simple hypotheses.
\newblock \emph{CoRR}, abs/1811.11148, 2018.

\bibitem[Chaudhuri \& Monteleoni(2009)Chaudhuri and
  Monteleoni]{Chaudhuri2009:Logistic}
Chaudhuri, K. and Monteleoni, C.
\newblock Privacy-preserving logistic regression.
\newblock In Koller, D., Schuurmans, D., Bengio, Y., and Bottou, L. (eds.),
  \emph{Advances in Neural Information Processing Systems 21}, pp.\  289--296.
  Curran Associates, Inc., 2009.

\bibitem[Chaudhuri et~al.(2011)Chaudhuri, Monteleoni, and
  Sarwate]{Chaudhuri2011:DPERM}
Chaudhuri, K., Monteleoni, C., and Sarwate, D.
\newblock Differentially private empirical risk minimization.
\newblock In \emph{Journal of Machine Learning Research}, volume~12, pp.\
  1069--1109, 2011.

\bibitem[Chaudhuri et~al.(2013)Chaudhuri, Sarwate, and Sinha]{Chaudhuri2013}
Chaudhuri, K., Sarwate, A.~D., and Sinha, K.
\newblock A near-optimal algorithm for differentially-private principal
  components.
\newblock \emph{Journal of Machine Learning Research}, 14\penalty0
  (1):\penalty0 2905--2943, January 2013.
\newblock ISSN 1532-4435.

\bibitem[Chen \& White(1998)Chen and White]{chen1998central}
Chen, X. and White, H.
\newblock Central limit and functional central limit theorems for
  hilbert-valued dependent heterogeneous arrays with applications.
\newblock \emph{Econometric Theory}, 14\penalty0 (2):\penalty0 260--284, 1998.

\bibitem[Dwork \& Roth(2014)Dwork and Roth]{Dwork2014:AFD}
Dwork, C. and Roth, A.
\newblock The algorithmic foundations of differential privacy.
\newblock \emph{Found. Trends Theor. Comput. Sci.}, 9\penalty0
  (3\&\#8211;4):\penalty0 211--407, August 2014.
\newblock ISSN 1551-305X.
\newblock \doi{10.1561/0400000042}.

\bibitem[Dwork et~al.(2006)Dwork, McSherry, Nissim, and
  Smith]{Dwork2006:Sensitivity}
Dwork, C., McSherry, F., Nissim, K., and Smith, A.
\newblock \emph{Calibrating Noise to Sensitivity in Private Data Analysis},
  pp.\  265--284.
\newblock Springer Berlin Heidelberg, Berlin, Heidelberg, 2006.
\newblock ISBN 978-3-540-32732-5.

\bibitem[Dwork et~al.(2014)Dwork, Talwar, Thakurta, and Zhang]{Dwork2014}
Dwork, C., Talwar, K., Thakurta, A., and Zhang, L.
\newblock Analyze gauss: Optimal bounds for privacy-preserving principal
  component analysis.
\newblock In \emph{Proceedings of the Forty-sixth Annual ACM Symposium on
  Theory of Computing}, STOC '14, pp.\  11--20, New York, NY, USA, 2014. ACM.
\newblock ISBN 978-1-4503-2710-7.
\newblock \doi{10.1145/2591796.2591883}.

\bibitem[Gaboardi et~al.(2016)Gaboardi, Lim, Rogers, and Vadhan]{Gaboardi2016}
Gaboardi, M., Lim, H., Rogers, R., and Vadhan, S.
\newblock Differentially private chi-squared hypothesis testing: Goodness of
  fit and independence testing.
\newblock In Balcan, M.~F. and Weinberger, K.~Q. (eds.), \emph{Proceedings of
  The 33rd International Conference on Machine Learning}, volume~48 of
  \emph{Proceedings of Machine Learning Research}, pp.\  2111--2120, New York,
  New York, USA, 20--22 Jun 2016. PMLR.

\bibitem[Goldsmith et~al.(2018)Goldsmith, Scheipl, Huang, Wrobel, Gellar,
  Harezlak, McLean, Swihart, Xiao, Crainiceanu, and Reiss]{refundpackage}
Goldsmith, J., Scheipl, F., Huang, L., Wrobel, J., Gellar, J., Harezlak, J.,
  McLean, M.~W., Swihart, B., Xiao, L., Crainiceanu, C., and Reiss, P.~T.
\newblock \emph{refund: Regression with Functional Data}, 2018.
\newblock R package version 0.1-17.

\bibitem[Hall et~al.(2013)Hall, Rinaldo, and Wasserman]{Hall2013}
Hall, R., Rinaldo, A., and Wasserman, L.
\newblock Differential privacy for functions and functional data.
\newblock \emph{Journal of Machine Learning Research}, 14\penalty0
  (1):\penalty0 703--727, February 2013.
\newblock ISSN 1532-4435.

\bibitem[Hoff \& Franks(2018)Hoff and Franks]{hoff2018package}
Hoff, P. and Franks, A.
\newblock \emph{rstiefel: Random Orthonormal Matrix Generation and Optimization
  on the Stiefel Manifold}, 2018.
\newblock R package version 0.20.

\bibitem[Hoff(2009)]{hoff2009simulation}
Hoff, P.~D.
\newblock Simulation of the matrix bingham--von mises--fisher distribution,
  with applications to multivariate and relational data.
\newblock \emph{Journal of Computational and Graphical Statistics}, 18\penalty0
  (2):\penalty0 438--456, 2009.

\bibitem[Imtiaz \& Sarwate(2016)Imtiaz and Sarwate]{Imtiaz2016}
Imtiaz, H. and Sarwate, A.~D.
\newblock Symmetric matrix perturbation for differentially-private principal
  component analysis.
\newblock In \emph{Acoustics, Speech and Signal Processing (ICASSP), 2016 IEEE
  International Conference on}, pp.\  2339--2343. IEEE, 2016.

\bibitem[Jiang et~al.(2013)Jiang, Ji, Wang, Mohammed, Cheng, and
  Ohno-Machado]{Jiang2013}
Jiang, X., Ji, Z., Wang, S., Mohammed, N., Cheng, S., and Ohno-Machado, L.
\newblock Differential-private data publishing through component analysis.
\newblock \emph{Transactions on data privacy}, 6\penalty0 (1):\penalty0 19,
  2013.

\bibitem[Karwa \& Slavkovi\'c(2016)Karwa and Slavkovi\'c]{Karwa2016:Inference}
Karwa, V. and Slavkovi\'c, A.
\newblock Inference using noisy degrees: Differentially private $\beta$-model
  and synthetic graphs.
\newblock \emph{The Annals of Statistics}, 44\penalty0 (1):\penalty0 87--112,
  02 2016.
\newblock \doi{10.1214/15-AOS1358}.

\bibitem[Karwa et~al.(2016)Karwa, Krivitsky, and
  Slavkovi\'c]{Karwa2016:Sharing}
Karwa, V., Krivitsky, P.~N., and Slavkovi\'c, A.~B.
\newblock Sharing social network data: differentially private estimation of
  exponential family random‐graph models.
\newblock \emph{Journal of the Royal Statistical Society: Series C (Applied
  Statistics)}, 66\penalty0 (3):\penalty0 481--500, 2016.
\newblock \doi{10.1111/rssc.12185}.

\bibitem[Kifer \& Lin(2010)Kifer and Lin]{Kifer2010}
Kifer, D. and Lin, B.-R.
\newblock Towards an axiomatization of statistical privacy and utility.
\newblock In \emph{Proceedings of the twenty-ninth ACM SIGMOD-SIGACT-SIGART
  symposium on Principles of database systems}, pp.\  147--158. ACM, 2010.

\bibitem[Kifer et~al.(2012)Kifer, Smith, and Thakurta]{Kifer2012:PrivateCERM}
Kifer, D., Smith, A., and Thakurta, A.
\newblock Private convex empirical risk minimization and high-dimensional
  regression.
\newblock \emph{Journal of Machine Learning Research}, 1:\penalty0 1--41, 01
  2012.

\bibitem[Kokoszka \& Reimherr(2017)Kokoszka and
  Reimherr]{kokoszka2017introduction}
Kokoszka, P. and Reimherr, M.
\newblock \emph{Introduction to functional data analysis}.
\newblock CRC Press, 2017.

\bibitem[McSherry \& Talwar(2007)McSherry and Talwar]{McSherry2007}
McSherry, F. and Talwar, K.
\newblock Mechanism design via differential privacy.
\newblock In \emph{Proceedings of the 48th Annual IEEE Symposium on Foundations
  of Computer Science}, FOCS '07, pp.\  94--103, Washington, DC, USA, 2007.
  IEEE Computer Society.
\newblock ISBN 0-7695-3010-9.
\newblock \doi{10.1109/FOCS.2007.41}.

\bibitem[{Mirshani} et~al.(2017){Mirshani}, {Reimherr}, and
  {Slavkovic}]{Mirshani2007}
{Mirshani}, A., {Reimherr}, M., and {Slavkovic}, A.
\newblock {On the Existence of Densities for Functional Data and their Link to
  Statistical Privacy}.
\newblock \emph{ArXiv e-prints}, November 2017.

\bibitem[Ramsay \& Silverman(2005)Ramsay and Silverman]{ramsay2005springer}
Ramsay, J. and Silverman, B.
\newblock \emph{Functional Data Analysis}.
\newblock Springer, 2005.

\bibitem[Ramsay et~al.(2018)Ramsay, Wickham, Graves, and Hooker]{fdapackage}
Ramsay, J.~O., Wickham, H., Graves, S., and Hooker, G.
\newblock \emph{fda: Functional Data Analysis}, 2018.
\newblock R package version 2.4.8.

\bibitem[Sheffet(2017)]{Sheffet2017}
Sheffet, O.
\newblock Differentially private ordinary least squares.
\newblock In Precup, D. and Teh, Y.~W. (eds.), \emph{Proceedings of the 34th
  International Conference on Machine Learning}, volume~70 of \emph{Proceedings
  of Machine Learning Research}, pp.\  3105--3114, International Convention
  Centre, Sydney, Australia, 06--11 Aug 2017. PMLR.

\bibitem[Smith(2011)]{Smith2011:Privacy-preservingSE}
Smith, A.
\newblock Privacy-preserving statistical estimation with optimal convergence
  rates.
\newblock In \emph{Proceedings of the Forty-third Annual ACM Symposium on
  Theory of Computing}, STOC '11, pp.\  813--822, New York, NY, USA, 2011. ACM.
\newblock ISBN 978-1-4503-0691-1.
\newblock \doi{10.1145/1993636.1993743}.

\bibitem[Smith et~al.(2018)Smith, Álvarez, Zwiessele, and Lawrence]{Smith2018}
Smith, M., Álvarez, M., Zwiessele, M., and Lawrence, N.~D.
\newblock Differentially private regression with gaussian processes.
\newblock In Storkey, A. and Perez-Cruz, F. (eds.), \emph{Proceedings of the
  Twenty-First International Conference on Artificial Intelligence and
  Statistics}, volume~84 of \emph{Proceedings of Machine Learning Research},
  pp.\  1195--1203, Playa Blanca, Lanzarote, Canary Islands, 09--11 Apr 2018.
  PMLR.

\bibitem[Vu \& Slavkovic(2009)Vu and Slavkovic]{Vu2009}
Vu, D. and Slavkovic, A.
\newblock Differential privacy for clinical trial data: Preliminary
  evaluations.
\newblock In \emph{Proceedings of the 2009 IEEE International Conference on
  Data Mining Workshops}, ICDMW '09, pp.\  138--143, Washington, DC, USA, 2009.
  IEEE Computer Society.
\newblock ISBN 978-0-7695-3902-7.
\newblock \doi{10.1109/ICDMW.2009.52}.

\bibitem[Wang et~al.(2018)Wang, Kifer, Lee, and Karwa]{Wang2018}
Wang, Y., Kifer, D., Lee, J., and Karwa, V.
\newblock Statistical approximating distributions under differential privacy.
\newblock \emph{Journal of Privacy and Confidentiality}, 8\penalty0 (1), 2018.

\bibitem[Wang et~al.(2015)Wang, Fienberg, and Smola]{Wang2015:PrivacyFree}
Wang, Y.-X., Fienberg, S.~E., and Smola, A.~J.
\newblock Privacy for free: Posterior sampling and stochastic gradient monte
  carlo.
\newblock In \emph{Proceedings of the 32nd International Conference on
  International Conference on Machine Learning - Volume 37}, ICML'15, pp.\
  2493--2502. JMLR.org, 2015.

\bibitem[Wasserman \& Zhou(2010)Wasserman and
  Zhou]{Wasserman2010:StatisticalFDP}
Wasserman, L. and Zhou, S.
\newblock A statistical framework for differential privacy.
\newblock \emph{Journal of the American Statistical Association},
  105:489:\penalty0 375--389, 2010.

\bibitem[Yu et~al.(2014)Yu, Rybar, Uhler, and Fienberg]{Yu2014}
Yu, F., Rybar, M., Uhler, C., and Fienberg, S.~E.
\newblock Differentially-private logistic regression for detecting multiple-snp
  association in gwas databases.
\newblock In \emph{Privacy in Statistical Databases: UNESCO Chair in Data
  Privacy, International Conference, PSD 2014, Ibiza, Spain, September 17-19,
  2014. Proceedings}, pp.\  170--184, Cham, 2014. Springer International
  Publishing.
\newblock ISBN 978-3-319-11257-2.
\newblock \doi{10.1007/978-3-319-11257-2_14}.

\bibitem[Zhang et~al.(2012)Zhang, Zhang, Xiao, Yang, and
  Winslett]{Zhang2012:FunctionalMR}
Zhang, J., Zhang, Z., Xiao, X., Yang, Y., and Winslett, M.
\newblock Functional mechanism: Regression analysis under differential privacy.
\newblock \emph{Proc. VLDB Endow.}, 5\penalty0 (11):\penalty0 1364--1375, July
  2012.
\newblock ISSN 2150-8097.
\newblock \doi{10.14778/2350229.2350253}.

\end{thebibliography}

\end{document}